\documentclass[journal,twoside,web]{ieeecolor}

\usepackage{generic}
\usepackage{amsmath,amssymb,amsfonts}
\usepackage{algorithmic}
\usepackage{graphicx}
\usepackage{textcomp}
\usepackage{comment}

\usepackage{amsthm}
\usepackage[utf8]{inputenc}
\usepackage[T1]{fontenc}
\usepackage{enumerate}
\usepackage[caption=false]{subfig}

\usepackage{soul}

\usepackage{hyperref}
\hypersetup{
    colorlinks=true,
    linkcolor=blue,
    filecolor=magenta,      
    urlcolor=cyan,
}

\usepackage[backend=biber,style=ieee,citestyle=ieee]{biblatex}
\AtBeginBibliography{\scriptsize}  
\addbibresource{references.bib}

\newtheorem{theorem}{Theorem}
\newtheorem{proposition}{Proposition}
\newtheorem{corollary}{Corollary}
\newtheorem{lemma}{Lemma}
\theoremstyle{definition}
\newtheorem{definition}{Definition}%[section]

\newtheorem{remark}{Remark}

\usepackage{titlesec}
\titlespacing*{\subsubsection}{0pt}{\baselineskip}{0pt}
%\titlespacing*{\subsubsection*}{0pt}{\baselineskip}{0pt} 

\def\BibTeX{{\rm B\kern-.05em{\sc i\kern-.025em b}\kern-.08em
    T\kern-.1667em\lower.7ex\hbox{E}\kern-.125emX}}
\markboth{\journalname}
{Almubarak \MakeLowercase{\textit{et al.}}: Barrier States Theory for Safety-Critical Multi-Objective Control}

\begin{document}

\title{Barrier States Theory for Safety-Critical Multi-Objective Control}

\author{Hassan Almubarak, Nader Sadegh, and Evangelos A. Theodorou  
\thanks{This work was supported in part by the National Aeronautics and Space Administration (NASA) under ULI Grant 80NSSC22M0070.}
\thanks{H. Almubarak was with School of Electrical and Computer Engineering, Georgia Institute of Technology, Atlanta, GA, USA.
(e-mail: halmubarak@kfupm.edu.sa).}
\thanks{N. Sadegh is with the George W. Woodruff School of Mechanical Engineering, Georgia Institute of Technology, Atlanta, GA, USA. (e-mail: sadegh@gatech.edu).}
\thanks{E. A. Theodorou is with the Daniel Guggenheim School of Aerospace Engineering, Georgia Institute of Technology, Atlanta, GA, USA. (e-mail: evangelos.theodorou@gatech.edu).}}

\maketitle
\begin{abstract}
Multi-objective safety-critical control entails a diligent design to avoid possibly conflicting scenarios and ensure safety. This paper addresses multi-objective safety-critical control through a novel approach utilizing \textit{barrier states (BaS)} to integrate safety into control design. It introduces the concept of \textit{safety embedded systems}, where the safety condition is integrated with barrier functions to characterize a dynamical subsystem that is incorporated into the original model for control design. This approach reformulates the control problem to focus on designing a control law for an unconstrained system, ensuring that the barrier state remains bounded while achieving other performance objectives.

The paper demonstrates that designing a stabilizing controller for the safety embedded system guarantees the safe stabilization of the original safety-critical system, effectively mitigating conflicts between performance and safety constraints. This approach enables the use of various legacy control methods from the literature to develop safe control laws. Moreover, it explores how this method can be applied to enforce input constraints and extend traditional control techniques to incorporate safety considerations. Additionally, the paper introduces input-to-state safety (ISSf) through barrier states for analyzing robust safety under bounded input disturbances and develops the notion of input-to-state safe stability (IS$^3$) for analyzing and designing robustly safe stabilizing feedback controls. The proposed techniques and concepts are used in various examples including the design of proportional-integral-derivative-barrier (PIDB) control for adaptive cruise control.
\end{abstract}
\vspace{-3mm}
\begin{IEEEkeywords}
Safety, Barrier Functions, Safe Control, Constrained Control, Feedback Control
\end{IEEEkeywords}
\vspace{-4mm}

\section{Introduction} \label{sec: Introduction}
\subsection{Background}
\IEEEPARstart{C}{ontrol} theory aims to develop mathematical laws or algorithms to derive a controlled dynamical system to a desired behavior while adhering to some performance objectives. 
Control systems often have different types of restrictions that need to be considered for a proper control. Those restrictions can be intrinsic or extrinsic to the system. In today’s fast growing, interdisciplinary technologies, control engineering has been a central element, from simple decision making problems to terrifically complex autonomous systems. Such systems often have diverse complicated restrictions or requirements that need to be simultaneously assured. An increasingly vital requirement in such systems is safety, in its various forms. Additionally, with the growing demand for autonomy across industries, this task remains daunting even in known environments. Therefore, there is a clear need for provably safe controls. Yet, the challenge of simultaneously satisfying performance and safety objectives, even when feasible, typically calls for trade-offs between them. 

Safety can be defined as the state of being secured or guarded against any type of danger. In other words, safety can be thought of as the avoidance of \textit{dangerous} states and the stay in harmless states. Lyapunov theory for ordinary differential equations (ODEs) has unquestionably influenced the introduction and the use of various mathematical notions to solve and analyse many problems in control theory. One of those notions is invariance, a mathematical property studied in different branches of mathematics, of subsets in the state space. For dynamical systems, a subset is (positively) invariant if at some point in time it contains the system's state then it contains the state for all (future) times \cite{blanchini1999set}. Evidently, given a dynamical system, safety can be represented and verified by the invariance of the set of permitted states in the state space and this set is referred to as the safe set. As the idea of sets' invariance is a crucial part of stability analysis in Lyapunov theory \cite{khalil2002nonlinear}, it can play a principal role in other control theory problems such as constrained or safe and robust control \cite{blanchini1999set}. Principally, the idea of invariant set of dynamical systems can be extended to \textit{control} systems in which a set is said to be \textit{controlled} invariant, also said to be viable, if for any initial condition in that set, the controlled system's trajectory is inside that set \cite{blanchini1999set}. This idea is of a huge interest as it can be used for control system's analyses of meeting different design requirements such as stability verification or constraints satisfaction as well as control synthesis.

As constraints are ubiquitous in control systems, different techniques have been proposed in the literature. Early in the control literature (before and at the start of the $21^{\text{st}}$ century), many of those techniques are based on rendering set invariance through Lyapunov analyses (see \cite{blanchini1999set,liu1994dynamical} and their references therein). Approaches that are considered in optimal control and optimization frameworks include using model predictive controls (MPC) to compute open-loop optimal controls \cite{mayne2000constrainedMPC,mayne2014mpc_recent} and using dynamic programming for feedback controls (see \cite{lin1991DDP_constrained_1,bertsekas2012dynamicprogramming} and the references therein). Another approach is reference governor which is an add-on paradigm to enforce state and/or control constraints pointwise in time through altering the reference input of the system \cite{gilbert2002nonlinear_reference_goven,kolmanovsky2014reference}. As it is the case for the unconstrained case, those approaches were developed mostly for linear systems or systems with linear (box) constraints. Nonetheless, with advancements in control theory and the increasing power of computational resources, these approaches have been extended, improved, and further investigated for nonlinear systems and general nonlinear constraint. Other approaches rely on barrier or barrier-like methods, which we focus on and build upon in this paper due to their flexibility, computational efficiency, and compatibility with existing control frameworks, including legacy controllers. We provide a detailed review of these methods in the control literature later in the paper as we develop the proposed theory.
%\subsection{Background and Brief Literature Review}
%\textcolor{green}{I'm considering not having background or review but telling a story and writing the background and review a long the paper especially in the early sections (maybe write a sentence about this in the intro). Or maybe brief background and then say that the background and review will be throughout the paper??}
\vspace{-7mm}
\subsection{Contributions and organization}
The original idea of barrier states was presented briefly and mainly for safe stabilization in a letter in \cite{Almubarak2021SafetyEC}. The idea was extended to discrete systems' trajectory optimization within dynamic programming \cite{almubarak2021safeddp}, to the min-max trajectory optimization case \cite{almubarak2021safeminmax} and various advanced control techniques \cite{aoun2023l1,cho2023mpcbas,oshin2024diffRobMPC}.
%to $\mathcal{L}_1$ adaptive control for constrained systems \cite{aoun2023l1}, to a model-predictive control (MPC) formulation \cite{cho2023mpcbas} as well as to a differentiable tube MPC \cite{oshin2024diffRobMPC}. 
This paper significantly extends our previous conference version in the following key directions: 
\begin{itemize}
    \item We refine and generalize the barrier-state (BaS) construction to be independent of the specific barrier function, drawing an analogy to dynamic compensation, and extend the formulation to general nonlinear systems beyond the control-affine class,
    \item provide a necessary and sufficient condition on the boundedness of the barrier state establishing safety guarantees,
    %\item derive a theorem that establishes the necessary and sufficient conditions for the existence of solutions to the newly formulated unconstrained (safety embedded) control problem,
    \item derive a theorem establishing necessary and sufficient conditions for solution existence to the newly formulated unconstrained (safety embedded) control problem,
%    \item We analyze the behavior of barrier-state feedback control under different controller designs, demonstrating its flexibility and influence on the closed-loop performance.
    \item investigate the design of BaS for input constraints,
    \item study safe robustness and robust safe stabilization through a novel proposition of input-to-state safety (ISSf) and input-to-state safe stability (IS$^3$) via barrier states for systems with input disturbance, and 
    \item demonstrate the versatility of the BaS framework through multiple control designs such as Lyapunov-based, LQR, and pole-placement across objectives including stabilization, robustness, and input constraints. In addition, we introduce a novel proportional-integral-derivative-barrier (PIDB) controller for adaptive cruise control, showcasing BaS integration with classical PID controls.
%    \item adopt well-known control techniques, such as proportional-integral-derivative (PID) controllers, with barrier states augmentation to develop an adaptive cruise controller termed the proportional-integral-derivative-barrier (PIDB) control, and
%    \item provide more control design examples with different control objectives, e.g. robustness and input constraints.
\end{itemize}

% A preliminary version of this work was presented in the conference publications [2] and [39]. The present paper adds to those two papers in the following important ways: the relations between the two forms of barrier functions are characterized; barriers with a higher relative degree are considered; the adaptive cruise control problem is extended from the lead vehicle’s speed being constant to the more realistic case of varying speed with bounded input force; and the lane keeping problem is considered under the proposed QP framework.

The paper is organized as follows. \autoref{sec: Problem Statement} introduces the problem statement with main set invariance and safe control definitions. \autoref{sec: Barrier States} is the main section which starts with reviewing barrier functions, their usage in dynamical systems and control theories, and the development of barrier and barrier-like based safe control methods. It then proceeds with discussing conventional conditions of barrier derivatives for safety verification and enforcement. The proposed ideas of barrier states and safety embedded systems as well as safe multi-objective control synthesis and conditions are subsequently presented. The idea is extended to input constraints afterwards. A tutorial example demonstrating the construction and behavior of the proposed safety embedded system, and its use in achieving almost-global safe stabilization, is subsequently presented. In \autoref{sec: input-to-state safe stability}, we develop the notion of input-to-state safety (ISSf) via barrier states and introduce the novel notion of input-to-state safe stability (IS$^3$) along with robust safe control and IS$^3$ control syntheses through safety embedded systems. \autoref{sec: Applications Examples} discusses various application examples of employing the proposed ideas in different problems including safety-critical linear control, with and without input constraints, adaptive cruise control through proportional-integral-derivative control with barrier states, and input-to-state safe stabilization for systems with input disturbance. Finally, concluding remarks are provided in \autoref{sec: Conclusion}. 
   
%\subsection{Notation}
\vspace{-1mm}
\section{Problem Statement} \label{sec: Problem Statement}
Consider the general class of nonlinear control systems
\begin{equation} \label{eq: control system dynamics}
    \dot{x}(t)= f(x(t),u(t))
\end{equation}
where $t \in \mathbb{R}_0^+, \ x \in \mathcal{X} \subset \mathbb{R}^n$, $u \in \mathcal{U} \subset \mathbb{R}^m$ and $f:\mathcal{X} \times \mathcal{U} \rightarrow \mathcal{X}$ is locally Lipschitz continuous with $x(0)=x_0$. %at which $f$ is forward complete. 
Unless necessary, time dependence is dropped throughout the paper for notational convenience. We consider the closed set $\overline{\mathcal{S}}\subset \mathcal{X}$ defined as the superlevel set of a continuously differentiable function $h(x): \mathcal{X} \rightarrow \mathbb{R}$, with $ \mathcal{S}$ and $\partial \mathcal{S}$ defined as its interior and boundary sets respectively such that
\begin{equation} \begin{split} \label{eq: safe set S}
    \overline{\mathcal{S}}:= \{ x \in \mathcal{X} \ | \ h(x) \geq 0\} \\
    \mathcal{S}:= \{ x \in \mathcal{X} \ | \ h(x) > 0\} \\
    \partial \mathcal{S}:= \{ x \in \mathcal{X} \ | \ h(x) = 0\}
\end{split} \end{equation} 
%The control system \eqref{eq: control system dynamics} must be controlled while ensuring that the system's states not leave the set $\mathcal{S}$.
We wish to design a \textit{safe feedback} control law $u(x)$ that realizes some predefined performance objectives while respecting the path constraint $h(x)>0$, and possibly control constraints which we discuss later in \autoref{sec: Barrier States}. A fundamental, yet obvious, assumption is that the multi-objective problem is feasible, i.e. there exists a solution to the problem.

Given the safety-critical control problem above, safety can be verified by the invariance of the permitted set. We now review \textit{controlled invariant sets} and safe controls to motivate our approach.
\begin{definition}[\cite{blanchini1999set}] \label{def: controlled invariant set}
The set $\mathcal{S} \subset \mathcal{X}$ is controlled invariant for the nonlinear control system $\dot{x}(t)= f(x(t),u(t))$ if for $x(0) \in \mathcal{S}$, there exists a continuous feedback controller $u=K(x)$, such that the closed-loop system $\dot{x}(t)= f\big(x(t),K(x(t))\big)$ %is forward complete with $x(t) \in \mathcal{S} \ \forall t \in \mathbb{R}^{+}$.
has the unique solution $x(t) \in \mathcal{S} \ \forall t \in \mathbb{R}^{+}$.
\end{definition}

\begin{definition} \label{def: safe control definition}
Given a set of possible initial conditions $\mathcal{X}_0 \subseteq \mathcal{S}$, the continuous feedback controller $u=K(x)$ is said to be safe if $\forall x(0) \in \mathcal{X}_0$, the safe set $\mathcal{S}$ is rendered forward invariant with respect to the resulting closed-loop system $\dot{x}(t)= f\big(x(t),K(x(t))\big)$. That is, by definition of the safe set $\mathcal{S}$ \eqref{eq: safe set S}, under $u=K(x)$ the solution $x(t)$ is such that 
\begin{equation} \label{eq: safety condition}
    \inf_{t \geq 0} h\big(x(t)\big) > 0 \ \forall t \in \mathbb{R}^{+} ; \ x(0) \in \mathcal{X}_0 
\end{equation}
\vspace{-2mm}
which is referred to as the safety condition.
\end{definition}
It is worth noting that in the above definitions, the closed-loop system is required to admit a unique solution for which local Lipschitz continuity of $f\big(x(t),K(x(t))\big)$ is a sufficient condition \cite{blanchini1999set}.

\section{Barrier States} \label{sec: Barrier States}

\subsection{Barrier Functions}
In the optimization literature, constrained optimization problems are commonly solved using penalty methods that transform the problem into an unconstrained one by adding a penalty function to the objective function. The penalty function is infinite when the constraint is violated and zero otherwise \cite{ben2020lecturesoptimization}, but this introduces discontinuities in the objective function. An alternative is to penalize the solution in the interior of the feasible set as it approaches the boundary using a penalty-like term added to the objective. This effectively forces the search for a solution to remain in the interior of the feasible set. These methods are known as barrier methods in which a smooth real valued mapping $\mathbf{B}: \mathbb{R} \rightarrow \mathbb{R}$ of the constraint function is used effectively preserving differentiability \cite{ben2020lecturesoptimization}. 

%In this work, we use $\mathbf{B}$ as the barrier operator, such as the popular inverse barrier function $\mathbf{B}(\eta) = \frac{1}{\eta}$, also known as the Carroll barrier, or the logarithmic barrier functions $\mathbf{B}(\eta) = -\log(\frac{\eta}{1+\eta})$. Hence, the proposed method in this paper is generic for any barrier function that satisfies the classical definition of barrier functions \cite{ben2020lecturesoptimization}. 
In this work, we use the mapping $\mathbf{B}$ as a barrier operator, defined using classical barrier functions such as the inverse (Carroll) barrier $\mathbf{B}(\eta) = \frac{1}{\eta}$ or the logarithmic barrier $\mathbf{B}(\eta) = -\log\left(\frac{\eta}{1+\eta}\right)$. We refer to $\mathbf{B}$ as an operator since it maps a scalar function to another scalar function, maintaining mathematical clarity and consistency throughout the manuscript. This formulation is generic and applicable to any function satisfying the classical definition of barrier functions in optimization \cite{ben2020lecturesoptimization}.
\begin{definition} \label{def: barrier functions}
A single valued function $\mathbf{B}: \mathbb{R} \rightarrow \mathbb{R}$ is a Barrier Function (BF), also known as an interior penalty function, if it is smooth on $(0,\infty)$ with a singularity at $0$ so that $\mathbf{B}(\eta)\rightarrow \infty$ as $\eta \rightarrow 0$ and $\mathbf{B}' \circ \mathbf{B}^{-1} (\eta)$ is $C^{\infty}$ everywhere.
\end{definition}
For a chosen barrier operator $\mathbf{B}$, we define the barrier $\beta: \mathcal{S} \rightarrow \mathbb{R}$ over the safety condition \eqref{eq: safety condition}, i.e. $ \beta(x):= \mathbf{B} \circ h\big(x(t)\big)$. This implies that $\beta(x) \rightarrow \infty$ as $x \rightarrow \partial \mathcal{S}$ and $\beta(x) < \infty \ \forall x \in \mathcal{S}$, ensuring satisfaction of the safety condition \eqref{eq: safety condition}.

\begin{proposition} \label{prop: boundedness of beta}
For the safety-critical control system \eqref{eq: control system dynamics}, given that $\ x(0) \in \mathcal{S}$, a continuous feedback controller $u=K(x)$ is safe, i.e. the safety condition \eqref{eq: safety condition} is satisfied and the safe set $\mathcal{S}$ is controlled invariant, if and only if $\beta(x(t))<\infty , \ \forall t \in \mathbb{R}^+_0$.
\end{proposition}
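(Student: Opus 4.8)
The plan is to reduce the proposition to the pointwise correspondence ``$\beta$ finite $\iff$ state in $\mathcal{S}$'', which comes purely from the barrier-operator axioms (\autoref{def: barrier functions}), and then to read off safety and controlled invariance from \autoref{def: controlled invariant set} and \autoref{def: safe control definition}. So the first step is the dictionary between $\beta$ and $h$: since $\beta=\mathbf{B}\circ h$ with $\mathbf{B}$ smooth, hence finite, on $(0,\infty)$ and singular at $0$ with $\mathbf{B}(\eta)\to\infty$ as $\eta\to 0^+$, it is natural (and consistent with $\beta$ being declared on $\mathcal{S}$) to extend $\beta$ by $+\infty$ wherever $h(x)\le 0$; then $\beta(x)<\infty\iff h(x)>0\iff x\in\mathcal{S}$ for every state $x$, and $\beta(x)=+\infty$ on $\partial\mathcal{S}$. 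This equivalence is the only place the barrier axioms enter.

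For the forward implication I would assume $u=K(x)$ is safe. By \autoref{def: safe control definition}, for any $x(0)\in\mathcal{S}$ the (unique, by Lipschitz continuity of $f$ and $K$, and defined on $\mathbb{R}^+_0$) closed-loop solution obeys the safety condition \eqref{eq: safety condition}, so $h(x(t))>0$ for all $t\ge 0$; by the dictionary, $\beta(x(t))<\infty$ for all $t\in\mathbb{R}^+_0$.

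For the converse I would assume $\beta(x(t))<\infty$ for all $t\ge 0$ along the closed-loop solution with $x(0)\in\mathcal{S}$ and argue by contradiction using continuity. The map $t\mapsto h(x(t))$ is continuous and positive at $t=0$; were the trajectory to leave $\mathcal{S}$, there would be a time $t^\star$ with $h(x(t^\star))=0$ (immediately, or by the intermediate value theorem once $h$ first becomes negative), i.e.\ $x(t^\star)\in\partial\mathcal{S}$, so $\beta(x(t^\star))=+\infty$, contradicting the hypothesis. Hence $x(t)\in\mathcal{S}$ for all $t\ge 0$; as this holds for every admissible initial condition, $\mathcal{S}$ is controlled invariant with $K$ the witnessing feedback (\autoref{def: controlled invariant set}) and the safety condition of \autoref{def: safe control definition} holds, i.e.\ $K$ is safe.

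The converse is the step needing care: ``$\beta$ finite at every instant'' must be converted into ``the trajectory never touches $\partial\mathcal{S}$'', which is not automatic from finiteness alone because some admissible barrier operators (e.g.\ the Carroll barrier $\mathbf{B}(\eta)=1/\eta$) remain finite on $\{h<0\}$ --- the continuity/IVT argument together with the convention $\beta\equiv+\infty$ on $\partial\mathcal{S}$ is exactly what forbids a trajectory from crossing the boundary. A second, minor point is that \autoref{def: safe control definition} identifies forward invariance of the open set $\mathcal{S}$ with the strict estimate $\inf_{t\ge 0}h(x(t))>0$; I would adopt that identification as given by the paper, since boundedness of $\beta$ at each instant corresponds directly to the trajectory remaining in $\mathcal{S}$.
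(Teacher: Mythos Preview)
Your proof is correct and follows essentially the same route as the paper's: both directions reduce to the pointwise equivalence $\beta(x)<\infty \iff h(x)>0$ coming from \autoref{def: barrier functions}, together with \autoref{def: controlled invariant set} and \autoref{def: safe control definition}. Your converse is a bit more explicit than the paper's---you invoke continuity and the intermediate value theorem to rule out boundary crossings and flag the subtlety that some barrier operators remain finite on $\{h<0\}$---whereas the paper simply relies on $\beta$ being declared as a map $\mathcal{S}\to\mathbb{R}$, so that finiteness of $\beta(x(t))$ already forces $x(t)\in\mathcal{S}$; but this is a presentational refinement, not a different argument.
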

\begin{proof}
$\Rightarrow$ Assume that there exists a continuous control law $u=K(x)$ such that $\mathcal{S}$ is controlled invariant with respect to the closed-loop system $\dot{x}=f(x,K(x)), \ \ x(0) \in \mathcal{S}$ with the solution $x(t) \in \mathcal{S} , \ \forall t \in \mathbb{R}^+$. Then, by \autoref{def: controlled invariant set} and \autoref{def: safe control definition}, $h(x) >0 \  \forall t \in \mathbb{R}^+ $. Hence, by \autoref{def: barrier functions} and the properties of the barrier function $\beta$, $\beta(x(t))<\infty \ \forall t \in \mathbb{R}^+_0$.

$\Leftarrow$ Given that $h\big(x(0)\big) > 0 \Rightarrow \mathbf{B}(h(x_0))<\infty$, suppose that we have $ \beta(x(t))<\infty , \ \forall t \in \mathbb{R}^+$, under some continuous controller $u=K(x)$. By \autoref{def: barrier functions} and definition of the barrier function $\beta$, $h\big(x(t)\big) > 0 , \ \forall t \geq 0$. Thus, by \autoref{def: controlled invariant set}, $\mathcal{S}$ is controlled invariant with respect to the closed-loop system $\dot{x}=f(x,K(x))$ and by \autoref{def: safe control definition} $u=K(x)$ is safe.
\end{proof}

As with many optimization techniques, constrained methods such as penalty and barrier methods have naturally found application in the optimal control literature \cite{snyman1992PenaltyOptimalControl,fabien1996ExtendedPenaltyOptimalControl,xing1989exactPenaltyOptimalControl,abu2004nearly_const,hauser2006barrieroptimalcontrol_CDC2006,malisani2016interior}. In the context of nonlinear systems and control, inspired by Lyapunov theory, \textcite{prajna2003barrier} proposed Barrier Certificates (BC) for model validation of nonlinear systems with uncertain parameters. The proposed novel method validates the model by using BC as a means to examine any contradictions between model and data. The proposition is that safety can be verified through a transversality condition in which if there exists a barrier-like function that has a non-positive time derivative, then the model is invalid. This proposition is akin to that of \textcite{nagumo1942lage} concerning positive invariance that basically states that if the system's state is in the boundaries of some set with a tangent derivative or one that points inwards, then the system's trajectory remains in the set given that the system admits a globally unique solution \cite{blanchini1999set}. Barrier certificates were then used for safety verification of nonlinear hybrid systems in \cite{prajna2004safety} and for stochastic systems in \cite{prajna2004stochastic_safety_ver}. The proposed method is capable of determining the systems safety without explicit computation of reachable sets as opposed to the safety verification literature. Essentially, a BC is a continuously differentiable function that is positive in the unsafe set and negative in the safe set. For any initial condition that is in the safe set, the BC must have a negative definite time derivative certifying the safety of the system. % If a barrier certificate is found such that for any initial start from the safe set yields a negative definite time derivative of the barrier certificate, then the systems is safe. 
The negative definiteness of the barrier's time derivative was then relaxed in \cite{kong2013exponentialBarCert, dai2017barriercertrelax}. Nonetheless, these barrier certificates do not necessarily have barrier functions properties in the classical sense, i.e. the function's value approaches infinity if the system approaches the boundaries of the safe set, and are usually constructed as sum of squares polynomials. 

Concurrent with the work of \textcite{prajna2003barrier}, \textcite{Ngo2004backstepping_constrained} used barrier functions for state constraints to present an implicitly analogous condition but with an eye towards forming a Lyapunov function. The proposed technique drew inspiration from Lyapunov analyses and control and was used specifically for systems with velocity state box constraints within a backstepping control paradigm. \textcite{tee2009barrier_Lyap_automatica} further generalized the approach and the augmented Lyapunov function was termed a Barrier-Lyapunov function (BLF) whose roots can be traced to Lyapunov wells \cite{sane2001phdthesis}.

Later, inspired by Control Lyapunov Functions (CLFs) and BC, \textcite{wieland2007constructive} introduced Control Barrier Functions (CBFs) to propose a feedback method of enforcing safety in continuous time systems. In an attempt to develop safe stabilization, \textcite{ames2014control} and \textcite{romdlony2014uniting} proposed spiritually similar, albeit distinct, CLF-CBF unification techniques. The work in \cite{romdlony2014uniting} was further detailed with Lyapunov analyses in computing a unified CLF-CBF controller in \cite{romdlony2016stabilization} that in essence resembles the BLF approach \cite{tee2009barrier_Lyap_automatica} but for a more general class of constraints. \textcite{ames2014control} pioneered the CLF-CBF Quadratic Programs (QPs) paradigm which was further developed in \cite{ames2016CBF-forSaferyCritControl}. It is worth mentioning that in the early developments of CBFs, they were also referred to as barrier certificates as they are originally based on BCs and share the same analyses \cite{borrmann2015control,wang2016safetyBC,wang2018safe}. Additionally, it is pertinent to note that there is the \textit{zeroing} CBF (ZCBF) \cite{ames2016CBF-forSaferyCritControl}, which works with the safety function $h(x)$ instead of wrapping a barrier around $h(x)$, though it needs to satisfy certain conditions to be a valid ZCBF. In this work, we mainly use and discuss barrier functions as defined in \autoref{def: barrier functions} which we refer to as \textit{classical} barriers in some parts of the paper. 
\vspace{-3mm}
\subsection{Barrier Derivatives (Do we need to dictate the barrier's rate of change?)}
The CLF-CBF QP and the developed CBFs by \textcite{ames2016CBF-forSaferyCritControl} have attracted researchers attention to be adopted in various control frameworks and robotic applications \cite{borrmann2015control,agrawal2017discrete,wang2018safe,choi2020reinforcement} for multiple reasons. Besides the advantages of CBF safety filters within a QP formulation, an important question that \textcite{ames2016CBF-forSaferyCritControl} posed and proposed an answer to is about a less restrictive condition on the derivative of the barrier function. \textcite{ames2016CBF-forSaferyCritControl} proposed a practical relaxation to the strict negative definiteness of the classical barrier's time derivative, or equivalently the strict positive definiteness of the constraint's time derivative, similar to the relaxation of barrier certificates for model validation \cite{kong2013exponentialBarCert, dai2017barriercertrelax}. In other words, the barrier function is allowed to grow when the safety-critical states are far from the boundaries of $\mathcal{S}$ and its growth rate approaches to zero as the safety-critical states approach $\partial \mathcal{S}$. This effectively removes the requirement of having invariant sub-level sets to the set $\mathcal{S}$. Mathematically, the conventional condition on the barrier function $\dot{\beta} \leq 0 $ is relaxed to
\begin{equation} \label{eq: cbf relaxed condition}
    \dot{\beta} \leq \alpha \big( h(x) \big)
\end{equation}
where $\alpha$ is a class $\mathcal{K}$ function, which is a continuous function with a positive definite domain that is strictly increasing and is zero at zero, i.e. $\alpha(0)=0$. The earlier relaxation of barrier certificates in \cite{kong2013exponentialBarCert} is a special case of this condition, namely when $\alpha$ is selected to be a linear function of $h$.

From a set invariance perspective, this means that instead of requiring every sub-level set of $\mathcal{S}$ to be forward invariant, only a single sub-level set is invariant. This is brilliant. Nonetheless, a pitfall of the condition in \eqref{eq: cbf relaxed condition} is that one has to find the growth rate function $\alpha$. For control purposes, dictating the \textit{allowed} growth rate of the barrier function can result in infeasibility or a hard to satisfy condition. This poses the difficulty of finding a \textit{valid} CBF, i.e. a CBF with a feasible allowed growth rate such that a solution is attainable. 

Additionally, to get a \textit{nice} inequality condition that can be solved and used with QPs, control affinity is needed, which basically helps setting up a convex program. Furthermore, the QP approach is myopic which may result in further practical challenges when combined with other control techniques \cite{almubarak2021hjbcbf,krstic2023inverse}. A more troublesome condition is that we must have $u$ to show up in the condition. This requires a well-defined relative degree which may not exist and makes the problem hard even if it exists but in higher derivatives in which a high relative degree CBF needs to be found. Multiple methods have been developed to construct a higher-order CBF, if it exists. A common method is through Exponential Control Barrier Functions (ECBFs) \cite{nguyen2016exponential} which is similar to how high relative degree CLF and feedback linearization are dealt with. Another approach is to define a new invariant safe set as the intersection of several invariant sets \cite{xiao2019highcbf}. However, these approaches tend to unnecessarily restrict the allowable state space and are difficult to implement and tune. To avoid finding a high-order CBF altogether, some practical implementations instead apply an ad-hoc solution that modifies the CBF or the safety constraint to ensure a relative degree of $1$  \cite{pereira2021safe,long2021learning,xiao2020feasibility}. However, this is likely to affect the performance or safety of the resulting system with respect to the original objective and safety constraint.

We go back to the question in \cite{ames2016CBF-forSaferyCritControl}, ``what conditions should be imposed on $\dot{\beta}$ so that $\mathcal{S}$ is forward invariant?''. The answer of \textcite{ames2016CBF-forSaferyCritControl} replaces the conventional answer, $\dot{\beta} \leq 0$, with the condition in \eqref{eq: cbf relaxed condition}. In this paper, we propose another answer, which we show to have advantageous byproducts, and hope that it provides a different perspective to the problem and opens a new research direction in safety-critical and multi-objective control theory.

The answer we propose is, \textit{it is up to the control law!} In other words, the rate of change of the barrier function should be up to the control law as long as the barrier is bounded. In essence, we move the focus back to be on the boundedness of the barrier function, for which it was originally developed (\autoref{prop: boundedness of beta}). It is worth noting, however, that dictating the rate of change of the barrier might be desirable in some applications. Nonetheless, our proposed approach also allows for such a preference through the control design selection. Now a different question arises. How can we design a control law that achieves the performance objectives while ensuring boundedness of the barrier function? 

Our proposed idea is to augment the barrier dynamics, i.e. the barrier rate of change, to the open loop model of the safety-critical system. Accordingly, the multi-objective control problem is now to design a controller that regulates the augmented system. This idea is compelling and popular in the control literature, e.g. in integral control, adaptive control and estimations, in which a system-dependent state is augmented to the system to be controlled along with the system's state.  
\vspace{-3mm}
\subsection{Barrier States and Safety Embedded Systems} \label{subsec: Barrier States and Safety Embedded Systems}
The proposed method is general for multi-objective control, such as regulation, stabilization, tracking, trajectory optimization and planning, etc. To motivate our development, we start by studying safe stabilization with the proposed method as an objective since stability is a major objective in control theory by which many control problems can be recast including, for example, forcing the system to a specific configuration, tracking a reference signal, etc.  We show the use of the proposed approach in different applications thereafter.
\begin{definition} \label{def: safe stabilizability}
    The control system \eqref{eq: control system dynamics} is said to be safely stabilizable around the origin if there exists a \textit{safe} continuous feedback controller $u=K(x)$ such that the origin of the closed-loop system $\dot{x}=f(x,K(x))$ is asymptotically stable. That is, the closed-loop system is stable with $\inf_{t \geq 0} h\big(x(t)\big) > 0 , \ \forall t \in \mathbb{R}^{+}$ and 
    %there is $\delta$ such that $||x(0)||<\delta \Rightarrow \lim_{t %\rightarrow \infty} x(t) \rightarrow 0$. Hence, 
    there exists an open neighborhood $\mathcal{A}_{\text{safe}} \subseteq \mathcal{S}$ of the origin such that $\forall x(0) \in \mathcal{A}_{\text{safe}}$, the solution $x(t) \in \mathcal{S}, \ \forall t \in \mathbb{R}^+$, with $\lim_{t \rightarrow \infty} x(t) \rightarrow 0$. 
\end{definition}

\subsubsection{Barrier States in Multi-Objective Control}
For the safety-critical nonlinear system \eqref{eq: control system dynamics}, the barrier function, $\beta$, has the following rate of change
\begin{equation} \label{eq: barrier beta dynamics}
    \dot{\beta} = \mathbf{B}'\big(h(x)\big) L_{f(x,u)} h(x) 
\end{equation}
which can be used to describe the \textit{state of the barrier}. Note that the dynamics of $\beta$ depends on the system's dynamics. As the idea is to augment the system's dynamics with the barrier's dynamics, we need to at least ensure stabilizability of the augmented system. % Hence, we propose the \textit{barrier state (BaS)} equation  
%\begin{equation} \label{eq: barrier states dynamics (beta)}
%        \dot{\beta} = \mathbf{B}'\big(h(x)\big) L_{f(x,u)} h(x)  - \gamma \big( \beta - \mathbf{B} \circ h(x) \big)
%\end{equation}
%where $\gamma \in \mathbb{R}^+$.
To justify our proposition, let us write the barrier's dynamics as an output feedback problem, for which a dynamic compensation is used (see for example \cite[Chapter 12]{khalil2002nonlinear} for the analogy), % Letting $\phi(x):=\mathbf{B}'\circ \mathbf{B}^{-1} (\beta)$, we have
\begin{align}  \label{eq: barrier beta dynamics with its definition for output feedback}
\begin{split}
    & \dot{\beta} = \mathbf{B}'\circ \mathbf{B}^{-1}(\beta)  L_{f(x,u)} h(x) \\ 
    & y =\beta(x)
\end{split} 
\end{align} 
where $y=\beta(x)$ plays the role of our measured output. Then, the observer dynamics is given by
\begin{align} \label{eq: barrier states dynamics (beta_hat)}
    \dot{\hat{\beta}} = \mathbf{B}'\circ \mathbf{B}^{-1}(\hat{\beta}) L_{f(x,u)} h(x) - \gamma \big( \hat{\beta}-\beta(x) \big)
\end{align}
where $\gamma >0$ is analogous to the observer gain in the dynamic compensation problem and the negative definiteness ensures a converging reduction of the state estimation error.

Without loss of generality, let us assume that we are interested in stabilizing the system around the origin, i.e. $f(0,0) =0$. The barrier function $\beta$ might have an arbitrary value at the origin $(x=0)$ that depends on the constraint and the selected barrier given by $\beta_0:=\beta(0) = \mathbf{B} \circ h\big(0\big)$. %Therefore, we shift its equilibrium point to the origin by defining $z:=\beta-\beta_0$. Hence, the barrier state (BaS) equation becomes
To offset this, we introduce a new state variable $z$, termed the barrier state (BaS), whose dynamics vanish when $x=0$ and $u=0$, ensuring that the origin remains an equilibrium point. The dynamics of this BaS aligns with the observer equation \eqref{eq: barrier states dynamics (beta_hat)}, where $\hat{\beta}$ is replaced by $z + \beta(0)$.
\begin{align} \label{eq: barrier states dynamics (z)}
\begin{split}
    \dot{z} & =: f_b(x,z,u) \\
    & = \mathbf{B}'\circ \mathbf{B}^{-1}(z+\beta_0) L_{f(x,u)} h(x) - \gamma \big(z+\beta_0 - \beta(x)\big)
\end{split}
\end{align}
The following lemma shows that boundedness of the BaS $z$ guarantees satisfaction of the safety constraint, and importantly when $z(0)$ is set to $\beta\big(x(0)\big) - \beta_0$, we have $z(t) = \beta\big(x(t)\big) - \beta_0$.
\begin{lemma} \label{lemma: boundedness of z means boundedness of beta}
Let $\mathcal{X}_b \subset \mathcal{X}$, $\mathcal{U}_b \subset \mathcal{U}$ be bounded sets of \eqref{eq: control system dynamics} and let $\epsilon >0 $ be such that $z(0)\in (\beta(x(0))-\epsilon,\beta(x(0))+\epsilon)$. Then, there exists $\gamma>0$ such that the BaS $z(t)$ generated by the state equation in \eqref{eq: barrier states dynamics (z)} along the trajectories of safety-critical system \eqref{eq: control system dynamics} is bounded if and only if $\beta \big(x(t)\big)$ is bounded $\forall t \geq 0$.
\end{lemma}

\begin{proof}
$\Rightarrow$ Suppose that $z(t)$ is bounded for all $t\geq0$, with any $z(0) \in (\beta(x(0))-\epsilon,\beta(x(0))+\epsilon)$. Define $\tilde{z}=z+\beta_0-\beta(x)$ and let $\phi(x):=\mathbf{B}'\circ \mathbf{B}^{-1} (\beta)$, which is continuously differentiable. Then, 
\begin{equation} \label{eq: barrier state error dynamics (zt)}
\dot{\tilde{z}}  =  \Big( \phi \big(\tilde{z}+\beta(x)\big) - \phi\big(\beta(x)\big) \Big) L_{f(x,u)} h(x) - \gamma \tilde{z}
\end{equation}
It can be easily seen that the preceding differential equation has an equilibrium point at $\tilde{z}=0$. Moreover, if $
z(0)=\beta \big(x(0)\big)-\beta(0) \Leftrightarrow \tilde{z}(0)=0
$, then $\tilde{z}(t) =0 \ \forall t \geq 0$ and
\begin{equation*} z(t)=\beta(x(t))-\beta(0), \ \forall t \geq 0\end{equation*}
Consequently, $\beta(x(t))=z(t)+\beta(0)$ for all $t\geq0$ is bounded.

$\Leftarrow$ Suppose that $\beta(x(t))$ is bounded starting from some $x(0) \in \mathcal{S}$ and let $\epsilon>0$ be given. We shall show that $\exists \gamma >0$ such that $\tilde {z}(t) \rightarrow 0$ exponentially $\forall z(0)\in (\beta(x(0))-\epsilon,\beta(x(0))+\epsilon)$, hence establishing the boundedness of $z(t)=\beta(x(t))-\beta(0)+\tilde{z}(t)$.

We claim that if $|\tilde{z}(0)|<\epsilon$, then $\tilde{z}(t)<\epsilon$, $\forall t \geq 0$. We proceed to prove this by contradiction. If the claim is not true, by continuity of the solution, there exists $T>0$ such that $|\tilde{z}(T)|=\epsilon$ and $\sup_{t\in [0,T]}|\tilde{z}(t)|\leq \epsilon$. Then, by the Mean Value Theorem, the error dynamics in \eqref{eq: barrier state error dynamics (zt)} become
\begin{equation}
\dot{\tilde{z}}=\phi'(\zeta)L_{f(x,u)}h(x)\tilde{z}-\gamma\tilde{z}
\end{equation}
for some $\zeta \in (\beta(x)-\epsilon,\beta(x)+\epsilon)$ since $|\tilde{z} (t)|\leq \epsilon, \ \forall t\in [0,T]$. Moreover, the boundedness of $x(t)$, $u(t)$, and $\beta(x(t))$ imply
$$
    \gamma_0:=\sup_{x\in \mathcal{X}_b, u\in \mathcal{U}_b} \sup_{\zeta}|\phi'(\zeta)|  |L_{f ( x(t),u(t))} h(x)| <\infty
$$
and subsequently $|\phi (z+\beta_0) - \phi\big(\beta(x)\big)|\leq \gamma_0 |\tilde{z}|$.

Now setting $\gamma>\gamma_0$ and differentiating $\tilde{z}(t)^2$ with respect to time using \eqref{eq: barrier state error dynamics (zt)} and the preceding inequality, it follows that
$$
\frac{d}{dt} \tilde{z}(t)^2 \leq -2\underline{\gamma} \tilde{z}(t)^2, \quad \underline{\gamma}=\gamma-\gamma_0
$$
This implies that $|\tilde{z}|$ is non-increasing on $[0,T]$ contradicting that $|\tilde{z}(t)|$ reaches $\epsilon$ at $t=T$. Thus, $\sup_{t\geq 0}|\tilde{z}(t)|\leq \epsilon$. Furthermore, by Gronwall's inequality,  $\lim_{t\rightarrow \infty}\tilde{z}(t)=0$ exponentially, which completes the proof.
\end{proof}

\subsubsection{Multiple BaS vs a Single BaS} \label{subsec: multipl bas vs sinlge bas}
For multiple constraints, i.e. $h(x)$ is a vector of $q$ constraint functions such that $h(x) = [h_1(x), h_2(x), \dots, h_q(x)]^{\top}$, one can construct a barrier state for each constraint, interpreting the inequality in the safety condition \eqref{eq: safety condition} element-wise. Nonetheless, a single barrier state can be constructed to represent multiple constraints. 

Suppose we wish to aggregate $p$ of the $q$ constraints to form a single BaS. One can determine an aggregated safety function $\textbf{h}$ by working out
\begin{equation} \label{eq: aggregated h (safety function)}
\frac{1}{\mathbf{h}} = \sum_{i=1}^{p} \frac{1}{h_i}
\end{equation}
and then derive a single barrier state as described earlier in \eqref{eq: barrier states dynamics (z)}. An equivalent method, but perhaps one that involves a little more complicated derivation,  is to define a barrier function for each constraint, define the aggregated barrier function as the summation over all the barrier functions and then derive the barrier state, as we showed in \cite{Almubarak2021SafetyEC}.
%Let $\beta^p(x) =\sum_{i=1}^{p} \mathbf{B} \circ h_i (x)$ and $\beta^{p}_0 = \beta^p(0)$. Then,
%$$
%\dot{\beta}=\sum_{i=1}^{p} \mathbf{B}'(h_i) L_{f(x,u)} h_i(x)
%$$
%and therefore the BaS can be found to be
%\begin{align} \label{multi-BaS}
%\dot{z}=\sum_{i=1}^{p}  \Big( \mathbf{B}'(h_i) L_{f(x,u)} h_i(x) \Big) -\gamma \Big( z+\beta^{p}_0 - \sum_{i=1}^p \mathbf{B}\big( h_i(x) \big)  \Big)
%\end{align}
 
Creating a single BaS is often desirable, as it avoids introducing multiple nonlinear state variables into the safety embedded model, which could otherwise increase the complexity of control design. However, aggregating multiple constraints into a single BaS may introduce strong nonlinearities, potentially complicating control design.  Additionally, one loses flexibility in assigning different design parameters to individual constraints—for example, in weighting penalty terms in an optimal control formulation or tuning the conservativeness of each constraint independently. It is worth noting that in some cases, a single function $h(x)$ may naturally encode multiple constraints via a single safe set, in which case a single BaS is appropriate. This trade-off is problem-dependent and should be made at the discretion of the control designer. In \autoref{sec: Applications Examples}, we design a single BaS to represent multiple constraints for some applications and construct multiple ones in others to validate the proposed technique.

\subsubsection{Safety Embedded Systems and Controls}
Now, we are in a position to create the safety embedded model. By defining $z=[z_1, \dots, z_{n_b}]^{\top} \in \mathcal{B} \subset \mathbb{R}^{n_b}$, where $n_b$ is the number of constructed barrier states if more than one BaS is used and $\mathcal{B}$ is an open set of all possible values of $z$, and augmenting it to the system \eqref{eq: control system dynamics}, we get
\begin{equation} \begin{split}\label{eq:Augmented_Representation} 
    & \dot{x}= f(x,u) \\ 
    & \dot{z}= f_b(x,z,u)
\end{split}\end{equation}
Defining $\bar{x}=\begin{bmatrix} x \\ z \end{bmatrix} \in \bar{\mathcal{X}} \subset \mathcal{X} \times \mathcal{B}, \bar{f}=\begin{bmatrix} f \\ f_b\end{bmatrix}: \bar{\mathcal{X}} \times \mathcal{U} \rightarrow  \bar{\mathcal{X}}$, we can rewrite this systems as
\begin{equation} \begin{split} \label{eq: safety embedded system}
    & \dot{\bar{x}}= \bar{f}(\bar{x}, u)
 %   & \dot{\bar{x}}= \bar{f}(\bar{x})+ \bar{g}(\bar{x}) u \\ 
\end{split} \end{equation}
with $\bar{x}(0)=[x_0, \ z_0]^{\top}$ for $x_0 \in \mathcal{S}$ and $z_0=z(0)$. %and $z_0=\beta(x_0)-\beta(0)$.
This system is referred to as the \textit{safety embedded system}. Note that this idea resembles a dynamic state feedback control in which the goal is to stabilize the origin $(x=0,z=0)$ \cite[Chapter~12]{khalil2002nonlinear}. 

%\begin{lemma} \label{lemma: cont diff and fbar(0)=0}
%    The safety embedded dynamics in \eqref{eq: safety embedded system} is continuously differentiable with $\bar{f}(0)=0$.
%\end{lemma}
%\begin{proof}
%    By continuous differentiability of $f$, $h$ and $\mathbf{B}$, clearly $f_b$ is continuously differentiable as it is the product and addition of continuously differentiable functions. $\bar{f}(0) =[f(0) \ \ f_b (0)]^{\top}$ with $f(0) =0$ and $f_b(0) = \mathbf{B}'\big( h(0) \big) (0) - \gamma \big(0+\beta_0 - \mathbf{B} \circ h(0) \big) = 0 $.
%\end{proof}

\begin{remark} \label{remark: barrier function use in the control law instead of propagation}
    The safety embedded system, incorporating the BaS dynamics, enables the transformation of a constrained control problem into an unconstrained one. In the deterministic full-state feedback case considered here, the solution of the differential equation of the barrier state \eqref{eq: barrier states dynamics (z)} is simply the shifted barrier function, that is $z=\mathbf{B}\circ h(x) - \beta_0, \ \forall t\geq 0$ provided $z(0)=\beta\big(x(0)\big) - \beta_0$. Thus, in practice, $z$ can be computed directly from the current state $x$ eliminating the need to propagate \eqref{eq: barrier states dynamics (z)}. In fact, \autoref{lemma: boundedness of z means boundedness of beta} shows that the error is exponentially convergent even if the initial condition of the BaS does not match $\beta\big(x(0)\big) - \beta_0$.
\end{remark}

The safety objective is now \textit{embedded} within the dynamics of the closed-loop system.
Stabilizing the augmented system \eqref{eq: safety embedded system} enforces safe stabilization of the original safety-critical system \eqref{eq: control system dynamics}, rendering the safe set $\mathcal{S}$ controlled invariant.  %Before proving such results, it remains to show that there exists a solution for the new control problem, i.e. feedback stabilizability is not distorted by the augmentation. 
We now state the main result: solving the safety embedded control problem \eqref{eq: safety embedded system} is equivalent to solving the original constrained problem \eqref{eq: control system dynamics}–\eqref{eq: safety condition}.
\begin{theorem} \label{theorem: embedded system has a solution if and only if original system has one}
    The original control system \eqref{eq: control system dynamics} is safely stabilizable at the origin if and only if the safety embedded control system \eqref{eq: safety embedded system} is stabilizable at the origin for some $\gamma>0$. 
   %In other words, a solution to the \textit{unconstrained} control problem in \eqref{eq: safety embedded system} exists if there exists a solution to the  \textit{constrained} control problem in \eqref{eq: control system dynamics}-\eqref{eq: safety condition}.
\end{theorem}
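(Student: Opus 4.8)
The plan is to prove the two implications separately, in both cases exploiting the fact that $\mathcal{M}:=\{(x,z) : z = \mathbf{B}\circ h(x) - \beta_0,\ x\in\mathcal{S}\}$ is an invariant manifold of the safety embedded closed loop on which the extra term $-\gamma\big(z+\beta_0-\beta(x)\big)$ in $f_b$ vanishes identically, so that the $z$-equation there reduces to $\dot z = \mathbf{B}'(h(x))L_{f(x,u)}h(x)$.

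\emph{Sufficiency ($\Leftarrow$).} Suppose $\bar K(x,z)$ renders the origin of $\dot{\bar x}=\bar f(\bar x,\bar K(\bar x))$ asymptotically stable for some $\gamma>0$. I would set $K(x):=\bar K\big(x,\mathbf{B}\circ h(x)-\beta_0\big)$, which is continuous on $\mathcal{S}$ since $h$ is $C^1$, $\mathbf{B}$ is smooth on $(0,\infty)$, and $h(x)>0$ there. The key observation is that if $x(\cdot)$ solves $\dot x=f(x,K(x))$ with $x(0)=x_0\in\mathcal{S}$, then $z(t):=\mathbf{B}\circ h(x(t))-\beta_0$ solves the barrier-state equation of the closed-loop embedded system: indeed $z(t)+\beta_0=\beta(x(t))$, so $\mathbf{B}'\circ\mathbf{B}^{-1}(z+\beta_0)=\mathbf{B}'(h(x))$ and $z+\beta_0-\beta(x)=0$, whence $f_b(x,z,K(x))=\mathbf{B}'(h(x))L_{f(x,K(x))}h(x)=\dot z$. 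Thus $\bar x(t)=(x(t),z(t))$ is the unique solution of the embedded closed loop from $\bar x(0)=\big(x_0,\beta(x_0)-\beta_0\big)$. Since the domain of attraction of an asymptotically stable equilibrium is open (by uniqueness and continuous dependence, guaranteed by the assumed Lipschitz continuity) and $x_0\mapsto\big(x_0,\mathbf{B}\circ h(x_0)-\beta_0\big)$ is continuous and maps $0$ to $(0,0)$, there is a bounded neighborhood $\mathcal{A}_{\text{safe}}\subseteq\mathcal{S}$ of the origin whose image lies in that domain. For $x_0\in\mathcal{A}_{\text{safe}}$ we then get $\bar x(t)\to(0,0)$ with $\bar x(\cdot)$ bounded, hence $\beta(x(\cdot))=z(\cdot)+\beta_0$ is bounded, so by \autoref{prop: boundedness of beta} $K$ is safe and $\mathcal{S}$ is controlled invariant; and $x(t)\to 0$, Lyapunov stability of $x=0$ being inherited from that of $(0,0)$ by restriction to $\mathcal{M}$. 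This is exactly \autoref{def: safe stabilizability}.

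\emph{Necessity ($\Rightarrow$).} Suppose $K(x)$ safely stabilizes the origin of \eqref{eq: control system dynamics}. Pick a compact positively invariant neighborhood $\mathcal{N}\subseteq\mathcal{A}_{\text{safe}}\subseteq\mathcal{S}$ of the origin; on $\mathcal{N}$ the trajectories $x(\cdot)$, the inputs $u(\cdot)=K(x(\cdot))$, and $\beta(x(\cdot))$ are all bounded (the last because $\mathcal{N}$ is compact in $\mathcal{S}$ and $\beta$ is continuous there). Fix $\epsilon>0$ and let $\gamma>0$ be the gain supplied by \autoref{lemma: boundedness of z means boundedness of beta} with $\mathcal{X}_b=\mathcal{N}$, $\mathcal{U}_b=K(\mathcal{N})$, so that along these trajectories the error $\tilde z = z+\beta_0-\beta(x)$ satisfies $|\tilde z(t)|\le|\tilde z(0)|$ and $\tilde z(t)\to 0$ exponentially whenever $|\tilde z(0)|<\epsilon$. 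I would take $\bar K(x,z):=K(x)$; the embedded closed loop is then a cascade whose $x$-subsystem $\dot x=f(x,K(x))$ is asymptotically stable and whose $z$-subsystem is driven by it. Attractivity of $(0,0)$ follows since $x(t)\to 0$ gives $\beta(x(t))\to\beta_0$ and $\tilde z(t)\to 0$, hence $z(t)=\tilde z(t)+\beta(x(t))-\beta_0\to 0$; Lyapunov stability follows because $|\tilde z(t)|\le|\tilde z(0)|$ is small for $\bar x(0)$ near $(0,0)$ (continuity of $\beta$) and $\beta(x(t))-\beta_0$ stays small by stability of the $x$-subsystem. Hence $(0,0)$ is asymptotically stable for this $\gamma$, so the embedded system is stabilizable.

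\emph{Main obstacle.} The analytic content is essentially all delivered by \autoref{lemma: boundedness of z means boundedness of beta}, and the delicate point is that the admissible $\gamma$ there depends on the chosen compact region and on $\epsilon$; since the theorem asserts only a \emph{local} property (stabilizability at the origin), I localize to the compact invariant set $\mathcal{N}$ before invoking the lemma, which sidesteps this. The remaining work is bookkeeping: verifying that $\mathcal{M}$ is genuinely invariant (i.e., the $\gamma$-term cancels on it, which makes the graph map a conjugacy between the original closed loop and the embedded one restricted to $\mathcal{M}$), and transporting open domains of attraction and Lyapunov estimates between $\mathbb{R}^n$ and $\bar{\mathcal{X}}$ along $x\mapsto\big(x,\mathbf{B}\circ h(x)-\beta_0\big)$.
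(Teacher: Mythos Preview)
Your proposal is correct and follows essentially the same route as the paper: in both directions you use the same choice of controller (restricting $\bar K$ to the graph $z=\beta(x)-\beta_0$ for $\Leftarrow$, and lifting $K(x)$ trivially to $\bar K(x,z)=K(x)$ for $\Rightarrow$), invoke \autoref{lemma: boundedness of z means boundedness of beta} for the error dynamics, and transport neighborhoods through the continuous map $x\mapsto(x,\beta(x)-\beta_0)$. Your presentation is somewhat more explicit than the paper's in two respects---you name the invariant manifold $\mathcal{M}$ (which the paper encodes only implicitly via the $\tilde z(0)=0\Rightarrow\tilde z(t)\equiv 0$ step in the lemma's proof) and you separately verify Lyapunov stability rather than just attractivity---but these are refinements of the same argument, not a different one.
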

\begin{remark}
    The theorem states that stabilizablitiy of the safety embedded system is necessary and sufficient for safe stabilizability of the original system. In other words, given a continuous feedback controller $u=K(\bar{x})$ such that the origin of the safety embedded closed-loop system, $\dot{\bar{x}}=\bar{f}\big(\bar{x}, K(\bar{x}) \big)$, is asymptotically stable, then there exists an open neighborhood $\mathcal{A}_{\text{safe}} \subseteq \mathcal{S}$ of the origin of the safety critical system \eqref{eq: control system dynamics} such that $u=K(\bar{x})$ is safely stabilizing (see \autoref{def: safe stabilizability}).
\end{remark}
\begin{proof}
  $\Rightarrow$ Suppose that the original control system is safely stabilizable in \eqref{eq: control system dynamics}-\eqref{eq: safety condition}. Then, by \autoref{def: safe stabilizability}, there exists a continuous feedback controller $u=K(x)$ such that given $x_0 \in \mathcal{A}_{\text{safe}}$, the closed-loop system $\dot{x} = f\big(x, K(x)\big)$ is stable and its solution $\big(x(t),x(0)=x_0 \in \mathcal{A}_{\text{safe}} \subseteq \mathcal{S} \big)$ defined for all $t \geq 0$ converges to the origin and satisfies $\inf_{t \geq 0} h\big(x(t)\big)>0$ for all $t \geq 0$, i.e. $x(t) \in \mathcal{S}, \ \forall t \in \mathbb{R}^+$. %By definition of asymptotic stability (cite Khalil's book), there exists an open neighborhood $\mathcal{A} \subset \mathcal{X}$ of the origin such that $x(t) \rightarrow 0$ as $t \rightarrow \infty$. Then, since $h\big(x(t)\big)>0 \ \forall t \geq 0$, $\mathcal{A} \cap \mathcal{S}=:\mathcal{A}_{\text{safe}}$.
  The properties of the original system, along with continuity of the solution $x(t)$ with respect to $t$ and the initial condition $x_0$ imply that the set of all possible trajectories of the states $\mathcal{X}_b$ with $x_0 \in \mathcal{A}_{\text{safe}}$ is bounded and so is the corresponding set of all input trajectories $\mathcal{U}_b=K(\mathcal{X}_b)$ \cite[Chapter~4]{khalil2002nonlinear}.
  
  Furthermore, by continuity of the barrier function over the state, $\beta:=[\beta_1 \; \cdots \beta_{n_b}]^{\top}$, the barrier image $\beta(x(t)) - \beta(0)$ is bounded and as $t \rightarrow \infty$, we have $\beta(x(t)) - \beta(0) \rightarrow 0$. This implies that for any  $\epsilon > 0$, and $z(0) \in (\beta(x(0)) - \epsilon, \beta(x(0)) + \epsilon)$, or equivalently $|\tilde{z}(0)| < \epsilon$, there exists $\gamma_i >0$, such that $\tilde{z}_i(t)=z_i(t)+\beta_i(0)-\beta(t)$ for all $i=1,2,...,n_b$, corresponding to each $\gamma_i$ and $z_i$. Consequently 
    %there exists an open neighborhood $\mathcal{Z} \subset \mathcal{B} \subset \mathbb{R}^{n_b}$ of the origin of the barrier's %state space in which 
    $z(t) \rightarrow 0$ as $t \rightarrow \infty$ by \autoref{lemma: boundedness of z means boundedness of beta}.
    
    Therefore, the augmented system is asymptotically stable since $x(t) \rightarrow 0$ and $z(t) \rightarrow 0$ as $t \rightarrow \infty$. Moreover, there exists an open neighborhood $\bar{\mathcal{A}}=\mathcal{A}_{\text{safe}} \times \mathcal{B}$ of the origin of the augmented system \eqref{eq: safety embedded system} in which the closed-loop system $\dot{\bar{x}} = \bar{f}\big(\bar{x}, K(x)\big)$ has the bounded solution $\big(\bar{x}(t), \bar{x}_0 \in \mathcal{S} \times \mathcal{B}\big)$ defined for all $t \geq 0$ and converges to the origin.

    $\Leftarrow$ Consider the continuous feedback controller $u=K(\bar{x})$ that renders the origin of the locally Lipschitz safety embedded system \eqref{eq: safety embedded system} asymptotically stable with a domain of attraction $\bar{\mathcal{A}}$. Then, there exist open neighborhoods $\mathcal{A}\subset \mathcal{X} \subset \mathbb{R}^n$ and $\mathcal{Z}\subset \mathcal{B} \subset \mathbb{R}^{n_b}$ of the origin with $\mathcal{Z}$ bounded such that $\mathcal{A} \times \mathcal{Z} \subset \bar{\mathcal{A}}$. For the vector $\beta:=[\beta_1 \; \cdots \beta_{n_b}]^{\top}$, by the continuity of $\tilde{\beta}(x)=\beta(x)-\beta(0)$ on $\{x \in \mathbb{R}^n: h(x)>0\}$, the inverse image of $\mathcal{Z}$ by $\tilde{\beta}$, $\tilde{\beta}^{-1} (\mathcal{Z})$, is an open neighborhood of the origin of the original system. Thus $\mathcal{A}_{\text{safe}}:= \mathcal{A} \cap \tilde{\beta}^{-1} (\mathcal{Z})$ is also an open neighborhood of the origin. 
    
    Hence, for any initial condition $x(0) \in \mathcal{A}_{\text{safe}}$, the trajectories $\bar{x}(t)$, $t\geq 0$, are bounded and converge to zero implying that $z(t)$ is bounded and converges to zero as well. Thus, by \autoref{lemma: boundedness of z means boundedness of beta}, $\beta(x)$ is bounded, guaranteeing the safety of $u=K(\bar{x})$ and the forward invariance of $\mathcal{S}$ with respect to the safety critical system \eqref{eq: control system dynamics} by \autoref{prop: boundedness of beta}.
\end{proof}

A key motivation for the barrier-state embedding framework is to integrate safety with performance objectives, e.g. stabilization, without explicitly relaxing either. When combined with a CLF, the resulting controller may resemble CLBF methods \cite{tee2009barrier_Lyap_automatica,romdlony2014uniting,romdlony2016stabilization}, a connection we leave for future study. For instance, our illustrative example at the end of this section uses a simple quadratic Lyapunov function on the embedded state, which resembles a CLBF. However, this arises naturally from the augmented dynamics and is not a direct combination of separate CLF and CBF terms. In general, a CLBF for the original constrained system or a CLF for the safety embedded system may be far more complex and not quadratic in the states.

%A motivation behind the development of barrier states embedding is integrating safety with other control objectives without \textit{direct relaxation} of any of the objectives. In essence, the safety control is coupled with the performance objectives, e.g. stability, to achieve the different objectives simultaneously, given that a solution exists to the original problem. Indeed, when set with a control Lyapunov functions approach, our method exhibits similar elements of the control Lyapunov barrier functions proposed in \cite{tee2009barrier_Lyap_automatica,romdlony2014uniting,romdlony2016stabilization} which we leave to investigate in a future work. It is worth noting, however, that the Lyapunov function of the safety embedded system, which is then a Lyapunov barrier function for the original system may not be a simple summation of the two functions as in the aforementioned work.

%\begin{remark}
% It is worth noting that one could work on controlling, e.g. stabilizing the BaS, for example by feedback linearization, Lyapunov functions construction, quadratic programs, etc., which will in fact result in the CBFs methods. However, our interest is in multi-objective control problems and thus the idea is to embrace different control techniques to work on the safety embedded system to achieve multi-objective control while avoiding conflicts between the different objectives. Note that performing feedback linearization, under the required conditions, on the barrier state alone results in equivalent results to those of CBF conditions.
%\end{remark}

\begin{remark}[Relative Degree] \label{remark: relative degree}
The proposed methodology is largely agnostic
%, which may be applied to a general class of nonlinear systems \eqref{eq: %control system dynamics}, is a state based control approach hence indifferent
to the relative degree associated with the safety function $h(x)$. As shown in \autoref{theorem: embedded system has a solution if and only if original system has one}, the theoretical guarantees hold regardless of relative degree. Thus, a control law for the safety embedded system \eqref{eq: safety embedded system} can be designed even when the relative degree is undefined, high, or mixed, as demonstrated in later examples. That said, some controller designs, such as Lyapunov-based or output feedback linearization, may still require knowledge of a well defined relative degree.

%It is crucial to note that the developed method is applicable to general nonlinear systems as described by \eqref{eq: control system dynamics} and is \textit{generally} agnostic to the relative degree of the safety function $h$, or equivalently of the barrier function, with respect to the system’s output. Specifically, the theory presented, particularly as outlined in \autoref{theorem: embedded system has a solution if and only if original system has one}, provides necessary and sufficient conditions for the problem irrespective of the relative degree of the barrier function. Consequently, a control law for the safety embedded system described in \eqref{eq: safety embedded system} can be designed even when the relative degree is not well-defined—whether it is high or the barrier exhibits a mixed relative degree, as illustrated in the subsequent sections on control synthesis and implementation examples. However, it is important to note that alternative control design methods, such as Lyapunov-based control or output feedback linearization techniques, may require knowledge of the relative degree.
\end{remark}

As previously discussed, designing feedback controllers based solely on barrier dynamics with fixed-rate constraints may conflict with other control objectives and lead to instability. In contrast, designing a controller for the safety embedded system \eqref{eq: safety embedded system} enables the use of standard control methods to simultaneously address safety and performance goals. The following sections show how even simple control techniques can yield effective safe controllers within this framework. In particular, we show how linear control, despite its simplicity, remains effective when applied to the linearized safety embedded system. This facilitates safe control design and is illustrated through several examples in \autoref{sec: Applications Examples}.

Without loss of generality, consider the linear system (e.g. linearization of \eqref{eq: control system dynamics} around the origin)
\begin{equation} \label{eq: linrarized system}
    \dot{x} = A x + B u
\end{equation}
%where $    A = \frac{\partial f}{\partial x}\Big|_{(x=0,u=0)}$ and $B = \frac{\partial f}{\partial u}\Big|_{(x=0,u=0)}$. 
The pair $(A,B)$ is assumed to be controllable or at least stabilizable. %Using the linearized dynamics, we can tackle different control problems. 
For a safety-critical system, to achieve the control objective safely, we deal with the safety embedded system \eqref{eq: safety embedded system}. Linearizing the safety embedded system around its origin, $(\bar{x}=0,u=0)$, yields 
\begin{equation} \label{eq: safety embedded linear system}
    \dot{\bar{x}} =\bar{A} \bar{x} + \bar{B} u
\end{equation}
where $\bar{A}=\begin{bmatrix} A & 0_{n \times 1} \\ \mathbf{B}' \circ \mathbf{B}^{-1}(\beta_0) \frac{\partial h}{\partial x}(0) A -\gamma \mathbf{B}'(\frac{\partial h}{\partial x}(0)) & -\gamma \end{bmatrix}$ and $\bar{B}=\begin{bmatrix} B \\ \mathbf{B}' \circ \mathbf{B}^{-1}(\beta_0) \frac{\partial h}{\partial x}(0)  B \end{bmatrix}$.
%\begin{gather} \label{eq: safety embedded linear system - A}
%    \bar{A}=\begin{bmatrix} A & 0_{n \times 1} \\ \mathbf{B}' \circ \mathbf{B}^{-1}(\beta_0) h_x(0) A -\gamma \mathbf{B}'(h_x(0)) & -\gamma \end{bmatrix}  \\ \label{eq: safety embedded linear system - B}
%    \bar{B}=\begin{bmatrix} B \\ \mathbf{B}' \circ \mathbf{B}^{-1}(\beta_0) h_x(0)  B \end{bmatrix}
%\end{gather}
Notice here that we assumed a single BaS for simplicity of presentation. 

\subsection{Input Constrained Safety Embedded Control} \label{subsec: input BaS}
Input constraints are inherent in control systems. Can we use the barrier states embedding method to enforce input constraints? Yes, we can. The idea is simple; define a new state for the input and define a new system whose input is the original input's derivative, a known trick in the literature used for different purposes. % This is a known trick in the literature used to \textcolor{magenta}{avoid some issues!? such as converting the system to a control-affine one!}

Consider the control system in \eqref{eq: control system dynamics} subject to some control constraints defined by $g(u)>0$. Notice that $g$ could be a vector of constraints, e.g. a set of box constraints for the vector $u$ which is the most popular form of input constraints. Let us define $v:= \dot{u}$ and $\tilde{x} = [x \ \ u]^{\top}$ which results in the system $\dot{\tilde{x}} =[
        f(x,u) \ \  v]^{\top}$.
%\begin{equation*}
%    \dot{\tilde{x}} = \begin{bmatrix}
%        f(x,u) \\ v
%    \end{bmatrix}
%\end{equation*}
%Defining a barrier function $\beta^u:=\mathbf{B}\big(g(u)\big)$ and a barrier state $z^u:= \beta^u - \beta^{u \circ}$, where $\beta^{u \circ}=\beta(u^{\text{eq}})$ and $u^{\text{eq}}$ is the desired input at the equilibrium point, yields
Defining a barrier function $\beta^u:=\mathbf{B}\big(g(u)\big)$ and a barrier state $z^u$, yields
\begin{equation} \label{eq: temporal barrier states dynamics z(t)}
     \dot{z}^{u} = f_b^u(u,z,v):= \mathbf{B}'\big( g(u) \big) L_{v} g(u) - \gamma \big(z^u+\beta^{u \circ} - \mathbf{B} \circ g(u) \big)
\end{equation}
where $\beta^{u \circ}=\beta(u^{\text{eq}})$ and $u^{\text{eq}}$ is the input at the equilibrium point. Let $\bar{x} = [\tilde{x} \ 
 \ z^u]^{\top}$. Then, the safety embedded system can be derived to be
\begin{equation*}
    \dot{\bar{x}} = [f(x,u) \quad  v \quad f_b^u(u,v)]^\top
\end{equation*}
which can also be written as 
\begin{equation*}
    \dot{\bar{x}} = \begin{bmatrix}   f(x,u) \\ 0 \\ - \gamma \big(z^u+\beta^{u \circ} - \mathbf{B} \circ g(u) \big)   \end{bmatrix} +  \begin{bmatrix}   0 \\ 1 \\ \mathbf{B}'\big( g(u) \big) \frac{\partial g}{\partial u}   \end{bmatrix} v
\end{equation*}

Note that such formulation might affect the relative degree of the system with respect to some constraints but that may not need to be a concern as discussed in \autoref{remark: relative degree} and as we demonstrate later in an implementation example.

\subsubsection*{Illustrative Example: Almost-Global Safe Stabilization}
To illustrate the core ideas of our framework, we present a simple example that highlights the construction and behavior of safety embedded dynamics using BaS. This setup also serves as the basis for the multi-agent system in \autoref{sec: Applications Examples}.

Consider a velocity-controlled point-mass robot navigating toward the origin while avoiding an obstacle centered at $(2,0)$ with radius $0.5$. The system dynamics are given by
$\dot{x} = \begin{bmatrix}  \dot{\rm{x}} \\ \dot{\rm{y}} \end{bmatrix} = u$, and the safety constraint is defined by $\mathcal{S}=\{x \in \mathbb{R}^2, \ h(x) =  (\rm{x} - 2)^2 + \rm{y}^2 - 0.5^2>0\} $. Using the inverse barrier function $\beta(x)=\frac{1}{h(x)}$, the barrier state dynamics are governed by
\begin{equation*}
    \dot{z} = -(z+\beta_0)^2 [2({\rm{x}}-2), \ 2{\rm{y}}] u - \gamma (z+\beta_0 -\beta(x))
\end{equation*}
 and $\beta_0=\beta(0)=\frac{1}{3.75}$. Then, the safety embedded system is  
\begin{equation*}
        \dot{\bar{x}} = \begin{bmatrix}
        u \\ g_z^\top u - \gamma(z+\beta_0 - \beta(x)) 
    \end{bmatrix}
\end{equation*}
where $g_z = -(z+\beta_0)^2 [2({\rm{x}}-2), \ 2{\rm{y}}]^\top$. While the controller can be chosen using any unconstrained control method, the closed-loop performance, such as the size of the region of attraction, depends on the control design. Here, we adopt a Lyapunov-based approach using Sontag's formula \cite{Sontag2009feedbackStablization}. Let $V = \frac{1}{2}\bar{x}^{\top} P \bar{x}$, with $P=\text{diag}(1,1,\kappa)$ and $\kappa>0$, then the feedback control law is
\begin{equation*}
    u = - x - \kappa g_z^\top z = \begin{bmatrix} - {\rm{x}} + 2 \kappa z ({\rm{x}} - 2)(z + \frac{1}{3.75})^2 \\ - {\rm{y}} + 2 \kappa  z {\rm{y}} (z + \frac{1}{3.75})^2 \end{bmatrix}
\end{equation*}
which asymptotically stabilizes the origin of the safety embedded system. The gain $\kappa$ governs how aggressively the robot steers away from the obstacle. Consequently, given asymptotically stability and boundedness of $z$, the origin of the original system is asymptotically \textit{safely} stable. Indeed, setting $z(0)=\beta \big(x(0)\big)-\beta_0$, it can be easily seen that $
\dot{V} = -\bar{x}^\top \begin{bmatrix} I_{2\times2} &  \kappa g_z \\  \kappa g_z^{\top} & \kappa^2 g_z^{\top}g_z \end{bmatrix}\bar{x}
$, which is semi-negative definite everywhere. However, due to the inherent topological obstruction, the closed-loop system admits another saddle equilibrium along the invariant half-line $\mathcal{L}_{0}=\{(\rm{x}, 0): x>2.5\}$ reflecting a discrete decision point as described by \textcite{sontag1999stability}. As a result, the closed-loop system achieves almost-global safe stabilization on the feasible region excluding $\mathcal{L}_{0}$, i.e. $\mathcal{A}_{\text{safe}} = \{ x\in \mathbb{R}^2, h(x)>0 \} \setminus \mathcal{L}_{0}$. \autoref{fig: single integ CLF almost-global} illustrates this behavior achieved by the barrier-states feedback controller with $\kappa=1$. The vector field shows how trajectories bend smoothly around the obstacle, guided by the repulsive effect of the barrier-state feedback. All safe initial conditions converge to the origin, except those lying exactly on the invariant half-line beyond the obstacle, which converge to the unstable saddle equilibrium.

\begin{figure}[t]
        \centering
    \includegraphics[trim=40 15 40 35, clip, width=\linewidth]{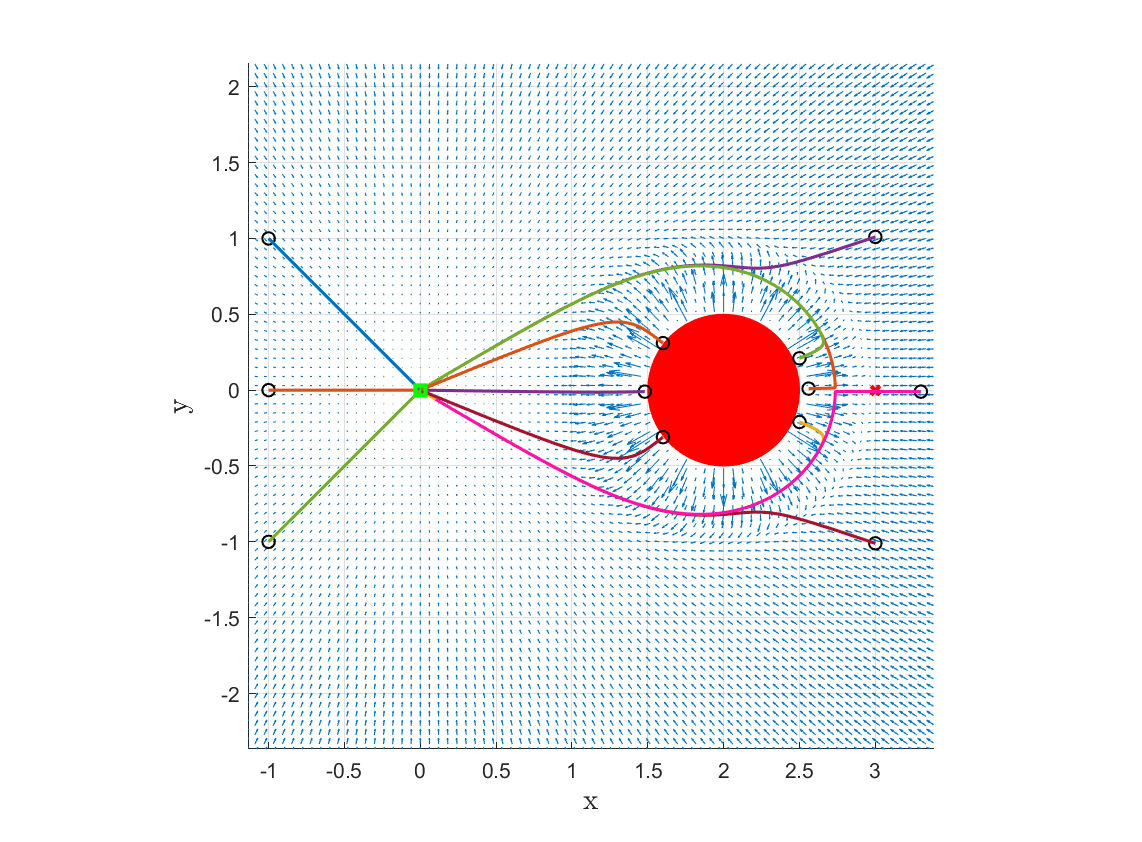}
      \caption{Simulations and phase portrait of the closed loop system under the BaS feedback controller $u=- x - g_z z$. The green $\textcolor{green}{\square}$ is the almost-globally stable equilibrium and the red $\textcolor{red}{\times}$ is the unstable one.}
      \label{fig: single integ CLF almost-global}
      \vspace{-4mm}
\end{figure}

%\begin{remark}
%    \textcolor{magenta}{One can derive safe stabilization guarantees for the input constrained problem similar to those presented in \autoref{theorem: embedded system has a solution if and only if original system has one} and \autoref{theorem: embedded system has a solution if and only if original system has one} which we omit. Additionally, such formulation might affect the relative degree of the system with respect to some constraints but that may not need to be a concern as we demonstrate later in our implementation examples.}
%\end{remark}

%\begin{remark}
%Although not common, one could use the same trick to deal with state-input inter-constraints, i.e. constraints which are functions of both inputs and states.
%\end{remark}

%\subsubsection{Example here!?}

\section{Input-to-State Safety (ISSf) and Input-to-State Safe Stability (IS\texorpdfstring{$^3$}{3})} \label{sec: input-to-state safe stability}
Safety-critical control design has achieved a lucid success in the past decade. Barrier methods discussed earlier have been a major element in the recent control literature. Nonetheless, a robust control design is crucial in ensuring effective control design, especially in safety-critical control. \textcite{sontag1989smooth} pioneered the notion of input-to-state stability (ISS) as a means of quantifying systems’ robustness to disturbances and constructing robustly stabilizing controllers. Motivated by the work of Sontag and the proposed feedback control redesign with the gradient of Lyapunov functions, \textcite{romdlony2016newnotionISSf} developed the notion of input-to-state safety (ISSf) following the footsteps of their earlier work of combining Lyapunov and barrier functions in \cite{romdlony2016stabilization}. Subsequently, \textcite{kolathaya2018ISSfCBF} formulated a related local ISSf notion within the CBF-QP framework \cite{ames2016CBF-forSaferyCritControl}, targeting controlled invariance of the safe set. Recently, \textcite{krstic2023inverse} explored ISSf properties in the context of inverse optimality of safety filters, using a CBF-QP framework interpreted through a zero-sum differential game, minimax optimal control, formulation. Notably, although ISSf is a barrier-based notion, similar ideas such as non-overshooting control were pursued earlier in the backstepping literature by \textcite{Krstic2006NonovershootingControl}.

In this section, we extend the barrier-state framework to analyze robustness under bounded external inputs or disturbances. Inspired by \textcite{romdlony2019robustness}, we define a notion of input-to-state safety based on barrier states and safety embedded systems. We then formalize the notion of input-to-state safe stability (IS$^3$) as a robustness property that unifies safety and stability under the proposed framework.

\vspace{-4mm}
\subsection{Input-to-State Safety via Barrier States}
%Consider the safety critical control \eqref{eq: control system dynamics} in which a disturbance $d$ enters the system through the control channels that is assumed to be locally essentially bounded, i.e. $d \in \ell^{\infty}$ where $\ell^{\infty}$ is the vector space of bounded sequences in $\mathbb{R}^m$. 
% As we shall see, the performance of the (controlled) system in handling the disturbances and staying in the safe region depends on the disturbance bounds. Based on that, we provide the following definition of robustly controlled invariant sets, a slight modification of that provided by \textcite{blanchini1999set}.
Consider the safety-critical system \eqref{eq: control system dynamics} subject to a disturbance $d \in \ell^{\infty}$, i.e., a locally essentially bounded input entering through the control channel. $\ell^{\infty}$ is the vector space of bounded sequences in $\mathbb{R}^m$. 

The system's ability to remain within the safe set under disturbance depends on the disturbance bounds. We therefore introduce a modified notion of robust controlled invariance, adapted from \textcite{blanchini1999set}. 
% That is, we are interested in rendering the safe set \textit{robustly} controlled invariant.
\begin{definition} \label{def: robustly controlled invariant set}
The set $\mathcal{S} \subset \mathcal{X}$ is robustly controlled invariant for the nonlinear control system $\dot{x}(t)= f(x(t),u(t), d(t))$ if for $x(0) \in \mathcal{S}$ and all $d(t) \in \mathcal{D}$, there exists a continuous feedback controller $u\big(x(t)\big)=K\big(x(t)\big)$, such that the closed-loop system $\dot{x}(t)= f\big(x(t),K(x(t)),d(t)\big)$ %is forward complete with $x(t) \in \mathcal{S} \ \forall t \in \mathbb{R}^{+}$.
has the unique solution $x(t) \in \mathcal{S} \ \forall t \in \mathbb{R}^{+}$.
\end{definition}

\begin{remark} \label{remark: ISSf feedback control design}
    The purpose of the ISSf concept, as well as the original ISS notion, is to assess the performance of a designed feedback controller in the presence of bounded disturbances. For the definition provided, we assume that $\mathcal{D} \subset \ell^{\infty}$ and $d$ influences the system through $u$. Consistent with the convention and existing literature \cite{sontag1989smooth,khalil2002nonlinear,mironchenko2023ISS-book}, $u$ is treated as the external input or disturbance. This perspective is based on viewing the system as a closed-loop configuration with some feedback controller, which will be formulated more explicitly in the control design discussion. It is also assumed that the control system can be made ISS \cite{sontag1989smooth} and ISSf \cite{romdlony2016newnotionISSf}.
    %The purpose of the ISSf notion, and originally the ISS notion, is to evaluate a designed feedback controller's performance under the effect of bounded disturbances. Therefore, for the preceding definition, we consider the case where $\mathcal{D} \subset \ell^{\infty}$ and $d$ enters the system through $u$. In what follows, following the convention and the literature \cite{sontag1989smooth,khalil2002nonlinear,mironchenko2023ISS-book}, $u$ is considered as the external input or disturbance since one can look at the system as a closed-loop system of some feedback controller, which we will formulate more explicitly when discussing the control design. Note that it is assumed here that the control system can be rendered ISS \cite{sontag1989smooth} and ISSf \cite{romdlony2016newnotionISSf}.     
    %\textcolor{magenta}{It is worth noting that for the controls part, we rely on the fact that if there exists a feedback control law such that the origin of the system is globally asymptotically stable (GAS), then there exists a \textit{modified} feedback control law such that the disturbed system is ISS \cite{sontag1989smooth}. For the local case, similar arguments can be achieved, see e.g. \cite[Chapter~2]{mironchenko2023ISS-book}.}
\end{remark}

%\textcolor{magenta}{\textbf{Note:} This is (and the notion of ISS generally) is connected to the time-dependent analysis in some sense since the disturbance makes our system non-autonomous and is basically using uniform stability arguments.}

For completeness, we review the definitions of class $\mathcal{K}$ functions, class $\mathcal{KL}$ functions and ISS systems \cite{sontag1989smooth,khalil2002nonlinear,mironchenko2023ISS-book}.

\begin{definition} \label{def: class K}
    A function $\alpha:[0,a) \rightarrow [0,\infty)$  is a class $\mathcal{K}$ function if it is continuous, strictly increasing and $\alpha(0)=0$. It is a class $\mathcal{K}_{\infty}$ if $a = \infty$ and for some $r \in \mathbb{R}$, as $ r \rightarrow \infty$, $\alpha (r) \rightarrow \infty$. 
\end{definition}

\begin{definition} \label{def: class KL}
    A function $\tilde{\alpha}:[0,a)\times[0, \infty) \rightarrow [0,\infty)$  is a class $\mathcal{KL}$ function if it is continuous, for each fixed $s \in  \mathbb{R}$, and some $r \in \mathbb{R}$, we have $\tilde{\alpha}(r,s)$ being a class $\mathcal{K}$ function w.r.t. $r$, and for each fixed $r$, we have $\tilde{\alpha}(r,s)$ to be decreasing w.r.t. $s$ such that $\tilde{\alpha}(r,s)\rightarrow 0$ as $s \rightarrow \infty$.
\end{definition}

\begin{definition} \label{def: ISS}
    The system \eqref{eq: control system dynamics} is input-to-state stable (ISS) if there exist $\tilde{\alpha} \in \mathcal{KL}$ and $\alpha_u \in \mathcal{K}$ such that $\forall x_0 \in \mathcal{X}, \ \forall u \in \ell^{\infty}$ and $t\geq 0$ we have
    \begin{equation} \label{eq: input-to-state stability condition}
        ||x(t)|| \leq \tilde{\alpha}(||x_0||, t) + \alpha_u (||u||_{\infty})
    \end{equation}
    It is locally input-to-state stable (LISS), if the condition \eqref{eq: input-to-state stability condition} only holds for $\forall x_0 \in \mathcal{X}_r, \ ||u||_{\infty}\leq r_u$ and $t \geq 0$, where $\mathcal{X}_r:=\{ x \in \mathcal{X}: ||x|| \leq r_x$\}.
\end{definition}

It is worth noting that ISS and LISS are closely related to uniform stability and uniform global asymptotic stability of dynamical systems \cite[Chapter~4]{khalil2002nonlinear}. Leveraging the formulation of barrier \textit{states},we construct input-to-state safety (ISSf) arguments analogous to those in the ISS framework. In what follows, we use the shifted barrier state $z$ whose dynamics vanish at the origin of the augmented system, allowing for a direct adaptation of standard ISS arguments. Notably, the adapted arguments are relaxed in that we require only uniform stability of the barrier state, rather than uniform asymptotic stability, under vanishing input ($u \rightarrow 0$ as $t \rightarrow \infty$). 

\begin{definition} \label{def: ISSf}
The safety embedded system is \emph{input-to-state safe (ISSf)} if there exist functions $\alpha_{z}, \alpha_{u} \in \mathcal{K}$ such that for all $x_0 \in \mathcal{X}_0$, $||z_0|| \leq \alpha_{z}(||x_0||)$, and all bounded inputs $u \in \ell^{\infty}$, the following condition holds for all $t\geq 0$:
            \begin{equation} \label{eq: input-to-state safety condition}
            ||z(t)|| \leq \alpha_{z}(||x_0||) + \alpha_{u}(||u||_{\infty}).
            \end{equation}
It is \emph{locally ISSf}(LISSf) if the above condition holds for all initial conditions $x_0$ within a subset $\mathcal{X}_{r_x} \subseteq \mathcal{X}_0$ and $||z(0)|| \leq Z_r < \infty$, with $||u||_{\infty}$ bounded, ensuring the existence of solutions for all $t\geq 0$, where $\mathcal{X}_{r_x}:=\{ x \in \mathcal{X}_0: ||x|| \leq r_x = \arg \max z(t) \in \mathcal{B}$\} and $Z_r:= \mathbf{B} \circ h(r_x)$.
\end{definition}
\begin{remark}
    The proposed condition \eqref{eq: input-to-state safety condition} is akin to that in \cite{romdlony2019robustness} albeit the latter focuses on staying away from the unsafe region. The barrier-state framework facilitates direct use of standard stability arguments from dynamical systems. In essence, in the absence of disturbances, $z(t)$ is uniformly stable implying that the system stays within the safe set. However, if the disturbance is large enough to violate \eqref{eq: input-to-state safety condition}, safety may no longer be guaranteed. For instance, if the initial condition lies near the boundary of the unsafe set, i.e. $z_0$ is large, even a small disturbance can cause $z(t)$ to become unbounded, violating \eqref{eq: input-to-state safety condition}.
\end{remark}

\subsection{Input-to-State Safe Stability (IS\texorpdfstring{$^3$}{3})}
While condition \eqref{eq: input-to-state safety condition} enables assessment of ISSf, a key objective is to evaluate both ISSf and ISS together, as this can inform the design of robust, safely stabilizing control laws. Building on the safety embedded system framework in \autoref{subsec: Barrier States and Safety Embedded Systems}, we now introduce the notion of input-to-state safe stability (IS$^3$).

\begin{definition}
    The safety critical system \eqref{eq: control system dynamics} is said to be input-to-state safely stable (IS$^3$) if it is ISS (\autoref{def: ISS}) and ISSf (\autoref{def: ISSf}) with $\alpha_z \in \mathcal{KL}$.
    It is locally input-to-state safely stable (LIS$^3$) if it satisfies both properties locally. %or is ISS and LISSf, LISS and ISSf or LISS and LISSf. 
\end{definition}

\begin{remark}
%    For the safe stabilization problem, notice that the state space is now considered to be $\mathcal{S}$ and as mentioned earlier, the set of initial states must be in $\mathcal{S}$, as indicated in \cite{romdlony2019robustness} for the stabilization problem with guaranteed safety. Hence, the new notion of input-to-state safe stability (IS$^3$) is defined over $\mathcal{S} \subset \mathcal{X}$.  
In the safe stabilization problem, the state space is restricted to the safe set $\mathcal{S}$, and thus all initial conditions must lie within $\mathcal{S}$. Hence, the new notion of input-to-state safe stability (IS$^3$) is explicitly defined over $\mathcal{S} \subset \mathcal{X}$, consistent with existing literature \cite{romdlony2019robustness}.
\end{remark}

Based on those definitions, we can establish the following significant proposition that allows us to work with the safety embedded system to achieve IS$^3$ for the safety-critical system. %It is worth noting that for the safety embedded control problem, we are interested in studying the IS$^3$ property of the closed-loop system $\dot{\bar{x}}=\bar{f}(\bar{x},K(\bar{x})+u)$, which results in an a feedback interconnected system. In such a case, under regularity conditions, the following lemma for interconnected systems is important before introducing our proposition. 
%\begin{lemma}[{\cite[Lemma~3.2]{mironchenko2023ISS-book}}] \label{lemma: interconnected ISS}
%    A forward complete subsystem $i$, is ISS if and only if $\exists \alpha_{ij}, \alpha_{ui} \in \mathcal{K}$ and $\exists \tilde{\alpha}_i \in \mathcal{KL}$ such that for all $ x_{i0} \in X_i$, all internal inputs $x_{\neq i} \in \ell^{\infty}$, for $i=1,\dots,N$ where $N$ is the dimension of the feedback interconnected system, and all external inputs $u$, we have 
%    \begin{equation}
%        ||x_i(t)|| \leq \tilde{\alpha}_i(||x_{0i}||,t) + \sum_{j\neq i}\alpha_{ij}(||x_{0j}||) + \alpha_{ui}(||u||_{\infty})
%    \end{equation}
%\end{lemma}

\begin{proposition} \label{prop: safety system is ISS iff orignal is IS3}
    The safety embedded system is (locally) ISS \eqref{eq: safety embedded system} if and only if the original system $\eqref{eq: control system dynamics}$ is (locally) IS$^3$.
\end{proposition}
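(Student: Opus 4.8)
The plan is to exploit the product structure $\bar{x}=(x,z)$ of the safety embedded system \eqref{eq: safety embedded system}: its $x$-subsystem is literally the original plant $\dot{x}=f(x,u)$, while its $z$-subsystem is exactly the barrier state dynamics \eqref{eq: barrier states dynamics (z)}, so a trajectory $\bar{x}(t)=(x(t),z(t))$ of \eqref{eq: safety embedded system} is nothing but the pair consisting of a trajectory of \eqref{eq: control system dynamics} driven by the same external input $u$ and its associated barrier state. Hence the proposition reduces to the elementary observation that an ISS estimate on the pair $(x,z)$ is equivalent to an ISS estimate on $x$ together with an ISS estimate on $z$, and, by \autoref{def: ISSf}, an ISS estimate on $z$ with a $\mathcal{KL}$ overshoot gain is precisely the ISSf-with-$\alpha_z\in\mathcal{KL}$ half of IS$^3$. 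Throughout I would use the norm equivalence $\max\{\|x\|,\|z\|\}\le\|\bar{x}\|\le\|x\|+\|z\|$ together with the standard facts that a finite sum of class $\mathcal{K}$ (resp. $\mathcal{KL}$) functions is class $\mathcal{K}$ (resp. $\mathcal{KL}$) and that composing a $\mathcal{KL}$ function in its first argument with a class $\mathcal{K}$ function again gives a $\mathcal{KL}$ function.

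For the direction IS$^3$ $\Rightarrow$ ISS of \eqref{eq: safety embedded system}, one starts from an ISS estimate $\|x(t)\|\le\tilde{\alpha}_1(\|x_0\|,t)+\alpha_{u,1}(\|u\|_\infty)$ and an ISSf estimate $\|z(t)\|\le\alpha_z(\|z_0\|,t)+\alpha_{u,2}(\|u\|_\infty)$ with $\alpha_z\in\mathcal{KL}$. Adding them, bounding $\|x_0\|\le\|\bar{x}_0\|$ and $\|z_0\|\le\|\bar{x}_0\|$, and collecting terms yields $\|\bar{x}(t)\|\le\|x(t)\|+\|z(t)\|\le\bar{\alpha}(\|\bar{x}_0\|,t)+\bar{\alpha}_u(\|u\|_\infty)$ with $\bar{\alpha}:=\tilde{\alpha}_1+\alpha_z\in\mathcal{KL}$ and $\bar{\alpha}_u:=\alpha_{u,1}+\alpha_{u,2}\in\mathcal{K}$, i.e. \eqref{eq: safety embedded system} is ISS; the local version is identical after restricting $x_0$ and $z_0$ to the balls appearing in \autoref{def: ISS} and \autoref{def: ISSf}.

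For the converse, ISS of \eqref{eq: safety embedded system} $\Rightarrow$ IS$^3$ of \eqref{eq: control system dynamics}, I would project the estimate $\|\bar{x}(t)\|\le\tilde{\alpha}(\|\bar{x}_0\|,t)+\alpha_u(\|u\|_\infty)$ onto each component. Using $\|x(t)\|\le\|\bar{x}(t)\|$ and $\|\bar{x}_0\|\le\|x_0\|+\|z_0\|\le\|x_0\|+\alpha_{z_0}(\|x_0\|)=:\rho(\|x_0\|)$ with $\rho\in\mathcal{K}$ (the admissibility constraint in \autoref{def: ISSf}) gives $\|x(t)\|\le\tilde{\alpha}(\rho(\|x_0\|),t)+\alpha_u(\|u\|_\infty)$ with $\tilde{\alpha}\circ\rho\in\mathcal{KL}$, which is ISS of the original system. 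Using $\|z(t)\|\le\|\bar{x}(t)\|$ and, in the deterministic full-state setting (cf. \autoref{remark: barrier function use in the control law instead of propagation}), the relation $z_0=\tilde{\beta}(x_0)$ with $\tilde{\beta}(x)=\beta(x)-\beta_0$ continuous and vanishing at the origin, one re-expresses the $\|\bar{x}_0\|$-dependence in terms of $\|z_0\|$ on the bounded region $\mathcal{S}_{r_x}$, obtaining $\|z(t)\|\le\alpha_z(\|z_0\|,t)+\alpha_u(\|u\|_\infty)$ with $\alpha_z\in\mathcal{KL}$; together with the ISS estimate just derived this is exactly IS$^3$. Finally, since boundedness of $z$ is boundedness of $\beta$, \autoref{lemma: boundedness of z means boundedness of beta} and \autoref{prop: boundedness of beta} confirm the robust forward invariance of $\mathcal{S}$ that underlies the ISSf interpretation.

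The step I expect to be the main obstacle is this last conversion in the converse direction: turning the $\|\bar{x}_0\|$-dependent overshoot produced by ISS of the embedded system into one depending only on $\|z_0\|$, as \autoref{def: ISSf} demands. Because $\|z_0\|=|\beta(x_0)-\beta_0|$ does not control $\|x_0\|$ globally (distinct states can share the same barrier value), this estimate is genuinely only local/semiglobal and must rely on the restriction to $\mathcal{S}_{r_x}$ and on continuity and compactness to produce the required class-$\mathcal{K}$ comparison function, and on matching it carefully to the quantifiers of \autoref{def: ISSf}; by contrast, the ``$\Leftarrow$'' direction and every $x$-component estimate are routine $\mathcal{K}$/$\mathcal{KL}$ bookkeeping.
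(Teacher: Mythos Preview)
Your proposal is correct and follows essentially the same route as the paper: both directions exploit the norm equivalence $\max\{\|x\|,\|z\|\}\le\|\bar x\|\le\|x\|+\|z\|$, add the two componentwise estimates to obtain ISS of the embedded system, and project the embedded ISS bound back onto each factor, using the class-$\mathcal K$ link $\|z_0\|\le\alpha_{z_0}(\|x_0\|)$ (and its inverse) from \autoref{def: ISSf} to rewrite the $\|\bar x_0\|$-dependent overshoot in terms of $\|x_0\|$ and $\|z_0\|$ separately. The paper dispatches the step you flag as the main obstacle simply by invoking $\|x_0\|\le\alpha_{z_0}^{-1}(\|z_0\|)$, so your more cautious treatment via continuity and compactness on $\mathcal S_{r_x}$ is, if anything, more careful than what appears there.
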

\begin{proof}
    We prove the general case and the local case evidently follows.
    
    $\Rightarrow$ Assume that the original system is IS$^3$. Then, $\exists \tilde{\alpha}_1, \ \alpha_z \in \mathcal{KL}$ and $\alpha_{u1}, \ \alpha_{u2} \in \mathcal{K}$ s.t. $\forall x_0 \in \mathcal{S}, \ \forall u \in \ell^{\infty}$ and $t\geq0$,
    \begin{align*}
                & ||x(t)|| \leq \tilde{\alpha}_1(||x_0||, t) + \alpha_{u1}(||u||_{\infty}) \\ 
                & ||z(t)|| \leq \alpha_z(||z_0||,t) + \alpha_{u2}(||u||_{\infty})
    \end{align*}
    Squaring both sides of both inequalities and adding them yields,
    \begin{align*}
     & ||x||^2 + ||z||^2 \leq \big(\tilde{\alpha}_1 + \alpha_{u1} \big)^2 +  \big(\alpha_z + \alpha_{u2} \big)^2 \\
                 \Rightarrow & ||\bar{x}||^2 \leq \big( \tilde{\alpha}_1 + \alpha_{u1} + \alpha_z + \alpha_{u2} \big)^2 \\
                 \Rightarrow & ||\bar{x}|| \leq \tilde{\alpha}_1 + \alpha_{u1} + \alpha_z + \alpha_{u2} 
    \end{align*}
    Define $\alpha_u(||u||_{\infty}) = \alpha_{u1}(||u||_{\infty}) + \alpha_{u2}(||u||_{\infty})$ and $\tilde{\alpha} (||\bar{x}_0||,t) = \tilde{\alpha}_1(||\bar{x}_0||,t) + \alpha_z(||\bar{x}_0||,t)$. By definition of $\bar{x}$, %and noting that $\alpha_z(||z_0||)$ is a constant, 
    %$||\bar{x}_0||^2 = ||x_0||^2 + ||z_0||^2$ and, consequently,
    $||x_0||\leq ||\bar{x}_0||$ and $||z_0||\leq ||\bar{x}_0||$,
    and hence
    \begin{align*}
            \tilde{\alpha}_1(||x_0||,t) + \alpha_z(||z_0||,t) & \leq \tilde{\alpha}_1(||\bar{x}_0||,t) + \alpha_z(||\bar{x}_0||,t)  \\
            & =\tilde{\alpha} (||\bar{x}_0||,t)
    \end{align*}
    Therefore, $||\bar{x}(t)|| \leq \tilde{\alpha}(||\bar{x}_0||, t) + \alpha_u(||u||_{\infty})$, 
    %\begin{equation*}
    %        ||\bar{x}(t)|| \leq \tilde{\alpha}(||\bar{x}_0||, t) + \alpha_u(||u||_{\infty})
    %\end{equation*}
    $\forall \bar{x}_0 \in \mathcal{S} \times \mathcal{B}, \ \forall u \in \ell^{\infty}$ and $t\geq0$.

    $\Leftarrow$ Suppose that the safety embedded system is ISS. Then, by definition we have that $\exists \tilde{\alpha} \in \mathcal{KL}$ and $\exists \alpha \in \mathcal{K}$ such that $\forall \bar{x}_0 \in \mathcal{S}\times\mathcal{B}, \ \forall u \in \ell^{\infty}$ and $t\geq 0$ we have
    \begin{equation*} 
        ||\bar{x}(t)|| \leq \tilde{\alpha}(||\bar{x}_0||, t) + \alpha_u (||u||_{\infty})
    \end{equation*}
    Clearly, this implies
    \begin{align*}
        ||x||^2 + ||z||^2 \leq \Big(\tilde{\alpha}(||\bar{x}_0||, t) + \alpha_u (||u||_{\infty})\Big)^2
    \end{align*}
    where each term in the left hand side is smaller than the right hand side. Hence, we can directly conclude that $\exists \tilde{\alpha}_1, \tilde{\alpha}_z \in \mathcal{KL}$ and $\alpha_u \in \mathcal{K}$ s.t. $\forall x_0 \in \mathcal{S}, \forall 0 \leq z_0 < \infty, \ \forall u \in \ell^{\infty}$ and $t\geq0$,
    \begin{align*}
                 ||x(t)|| & \leq \tilde{\alpha}(||x_0||, t) + \tilde{\alpha}(||z_0||,t) + \alpha_u(||u||_{\infty}) \\ 
                 & \leq \tilde{\alpha}(||x_0||, t) + \tilde{\alpha}\big(\alpha_{z_0}(||x_0||),t\big) + \alpha_u(||u||_{\infty}) \\ 
                 & \leq \tilde{\alpha}_1(||x_0||, t) + \alpha_u(||u||_{\infty})
    \end{align*}    
    and
    \begin{align*}
                ||z(t)||  & \leq \tilde{\alpha}(||z_0||, t) + \tilde{\alpha}(||x_0||,t) + \alpha_u(||u||_{\infty}) \\ 
                & \leq \tilde{\alpha}(||z_0||, t) + \tilde{\alpha}(\alpha_{z_0}^{-1}(||z_0||),t\big) + \alpha_u(||u||_{\infty}) \\ 
                 & \leq \tilde{\alpha}_z(||z_0||,t) + \alpha_u(||u||_{\infty}) \leq \alpha_z(||z_0||) + \alpha_u(||u||_{\infty})
    \end{align*}   
    Hence, the system is ISS and ISSf (the BaS is ISS which is a stronger condition than \eqref{eq: input-to-state safety condition}), which completes the proof.
\end{proof}

%The following theorem (similar to that of \cite[Theorem~4.19]{khalil2002nonlinear}) provides a sufficient condition for input-to-state safety stability (IS$^3$).
%\begin{theorem} \label{theorem: sufficient condition for IS3}
%    Consider a continuously differentiable function $V(\bar{x}): \bar{\mathcal{X}} \rightarrow \mathbb{R}$,  a continuous positive definite function $W(\bar{x}):\bar{\mathcal{X}} \rightarrow \mathbb{R}$, class $\mathcal{K}_{\infty}$ functions $\alpha_1, \alpha_2$ and a class $\mathcal{K}$ function $\rho$ such that
%    \begin{equation}
%        \alpha_1(||\bar{x}||) \leq V(t,\bar{x}) \leq \alpha_2(||\bar{x}||) 
%    \end{equation}
%    \begin{equation}
%        V_{\bar{x}} f(\bar{x},u) \leq -W(\bar{x}), \ \forall ||\bar{x}|| \geq \rho(||u||) > 0
%    \end{equation}
%    for all $(\bar{x},u) \in \bar{\mathcal{X}} \times \mathbb{R}^m$. Then, the safety-critical system \eqref{eq: control system dynamics} is input-to-state safely stable (IS$^3$) with $\alpha_u = \alpha_1^{-1} \circ \alpha_2 \circ \rho$. 
%\end{theorem}

%\begin{proof}
%By \cite[Theorem~4.19]{khalil2002nonlinear}, we have 
%\begin{equation}
%    ||\bar{x}(t)|| \leq \tilde{\alpha}(||\bar{x}_0||, t) + \alpha_u (||u||_{\infty})
%\end{equation}
%for all $\bar{x}_0 \in \mathcal{S}\times\mathcal{B}, \ u \in \ell^{\infty}$ and $t\geq 0$. Then, by \autoref{prop: safety system is ISS iff orignal is IS3}, the system \eqref{eq: control system dynamics} is input-to-state safely stable (IS$^3$). 
%\end{proof}

Since our goal is to design multi-objective safe controllers, we are particularly interested in ensuring that the safety embedded controller renders the safe set robustly invariant under bounded input disturbances.\autoref{prop: safety system is ISS iff orignal is IS3} is pivotal in this context, as it enables the use of established ISS tools to characterize and enforce IS$^3$ through the safety embedded system. The Lyapunov-like theorem in \cite[Theorem~4.19]{khalil2002nonlinear}, provides a sufficient condition for ISS. Applying that theorem for the safety embedded system, allows us to construct a sufficient condition for IS$^3$ based on \autoref{prop: safety system is ISS iff orignal is IS3}.

Moreover, under standard regularity conditions, global exponential stability of the undisturbed system implies ISS \cite{khalil2002nonlinear}, while local asymptotic stability implies local ISS \cite{mironchenko2016local-iss}. In other words, an asymptotically stable autonomous system is inherently robust to small bounded disturbances \cite[Chapter~2]{mironchenko2023ISS-book}, as ISS generalizes uniform asymptotic stability by accounting for external inputs. The following corollary leverages these facts to establish sufficient conditions for IS$^3$ and ISSf under the proposed framework, which is a direct result of \autoref{prop: safety system is ISS iff orignal is IS3} and \cite[Theorem~4.19]{khalil2002nonlinear}.

%Under general regularity conditions of the systems dynamics in \eqref{eq: control system dynamics}, it has been established that global exponential stability of the autonomous, i.e. undisturbed, system corresponds to input-to-state stability \cite{khalil2002nonlinear}. Consequently, local asymptotic stability of the autonomous system corresponds to local input-to-state stability \cite{mironchenko2016local-iss}. In other words, an asymptotically stable undisturbed system is intrinsically robust to small enough external inputs \cite[Chapter~2]{mironchenko2023ISS-book}. This stems from the definition of input-to-state stability and the fact that asymptotic stability is equivalent to uniform asymptotic stability of autonomous systems. The following corollary provides neat results for ISSf and IS$^{3}$ when combined with the developed results in the preceding subsections, which is a direct result of \autoref{prop: safety system is ISS iff orignal is IS3} and \cite[Theorem~4.19]{khalil2002nonlinear}.

\begin{corollary} \label{corollary: linear system AS then IS3}
    Suppose that the safety embedded system \eqref{eq: safety embedded system} with $\bar{f}(0,0)=0$ is continuously differentiable. Let $u=K(\bar{x})$ be a continuous feedback controller such that the undisturbed closed-loop system $\dot{\bar{x}} = \bar{f}\big(\bar{x},K(\bar{x})\big)$ is globally Lipschitz for all $\bar{x} \in \mathcal{S} \times \mathcal{B}$ and its origin $\bar{x}=0$ is globally exponentially stable. Then, the safety-critical system \eqref{eq: control system dynamics} is input-to-state safely stable (IS$^3$). It is locally IS$^3$, if the continuous feedback controller renders the origin of the undisturbed closed-loop system uniformly asymptotically stable.
\end{corollary}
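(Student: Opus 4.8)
The plan is to derive the corollary as an essentially immediate composition of two ingredients: \autoref{prop: safety system is ISS iff orignal is IS3}, which already converts the robustness question for the safety-critical system into an ISS question for the safety embedded closed loop, and the classical result that exponential stability of an undisturbed system implies ISS, namely \cite[Theorem~4.19]{khalil2002nonlinear}. The only real content is to set up the disturbed closed loop correctly and to check that the stated regularity hypotheses are exactly those demanded by that theorem.

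First I would form the disturbed closed-loop safety embedded system. Since the disturbance $d \in \ell^{\infty}$ enters through the control channel, applying the feedback $u = K(\bar{x})$ gives
\[
\dot{\bar{x}} = \bar{f}\big(\bar{x}, K(\bar{x}) + d\big) =: \bar{g}(\bar{x}, d),
\]
and by hypothesis the origin is an equilibrium of the undisturbed closed loop, so $\bar{g}(0,0) = \bar{f}\big(0, K(0)\big) = 0$. The assumptions that $\bar{f} \in C^{1}$ and that the undisturbed closed-loop field is globally Lipschitz on $\mathcal{S} \times \mathcal{B}$, combined with $d$ entering through the smooth slot of $\bar{f}$, supply precisely the joint regularity in $(\bar{x}, d)$ that \cite[Theorem~4.19]{khalil2002nonlinear} requires. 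Hence global exponential stability of $\dot{\bar{x}} = \bar{g}(\bar{x}, 0)$ yields that $\dot{\bar{x}} = \bar{g}(\bar{x}, d)$ is ISS with $d$ as input; in the convention of \autoref{remark: ISSf feedback control design} this is the statement that the safety embedded system is ISS. Note also that the ISS estimate produced this way has a $\mathcal{KL}$-type bound on the $z$-component, so the $\alpha_z \in \mathcal{KL}$ requirement in the definition of IS$^{3}$ is met; then \autoref{prop: safety system is ISS iff orignal is IS3} delivers that the original safety-critical system \eqref{eq: control system dynamics} is IS$^{3}$.

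For the local claim I would repeat the same three steps with the local analogue of the ISS theorem in place of \cite[Theorem~4.19]{khalil2002nonlinear}: if the origin of $\dot{\bar{x}} = \bar{g}(\bar{x}, 0)$ is uniformly asymptotically stable and $\bar{g}$ has the required (locally Lipschitz, suitably uniform) regularity, the system is locally ISS — this is the local ISS fact invoked in the text, \cite{mironchenko2016local-iss}, and it can also be recovered by feeding a converse-Lyapunov function \cite[Theorem~4.16]{khalil2002nonlinear} into the Lyapunov sufficient condition for ISS on a neighborhood of the origin. Local ISS of the safety embedded system then gives local IS$^{3}$ of the original system by the local half of \autoref{prop: safety system is ISS iff orignal is IS3}.

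The step I expect to be the main obstacle is the technical bookkeeping on regularity and domain: \cite[Theorem~4.19]{khalil2002nonlinear} is stated for a $C^{1}$, globally Lipschitz vector field on all of Euclidean space, whereas here $K$ is only assumed continuous and the safety embedded state lives in $\mathcal{S} \times \mathcal{B}$ rather than $\mathbb{R}^{n+n_{b}}$. I would handle this by observing that global exponential stability of the undisturbed closed loop (for small enough $d$) confines trajectories to a bounded positively invariant region inside $\mathcal{S} \times \mathcal{B}$ on which the field is Lipschitz by assumption, so the vector field may be smoothly extended/truncated outside a neighborhood of that region without altering the ISS estimate along the relevant trajectories, and the assumed global Lipschitzness of $\bar{f}(\cdot, K(\cdot))$ furnishes exactly the uniform Lipschitz bound the theorem uses. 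Everything else is a direct citation.
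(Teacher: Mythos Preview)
Your proposal is correct and follows essentially the same approach as the paper: establish (local) ISS of the safety embedded closed loop via the classical Khalil result and then invoke \autoref{prop: safety system is ISS iff orignal is IS3}. The only cosmetic difference is that the paper cites \cite[Lemma~4.6]{khalil2002nonlinear} directly (the packaged statement ``global exponential stability plus global Lipschitz implies ISS'') and \cite[Corollary~2.39]{mironchenko2023ISS-book} for the local case, whereas you unwind this through \cite[Theorem~4.19]{khalil2002nonlinear} and are more explicit about the domain and regularity bookkeeping; the underlying argument is the same.
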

\begin{proof}
    The proof follows that of \cite[Lemma~4.6]{khalil2002nonlinear} for ISS in the global sense (see \textcite[Corollary~2.39]{mironchenko2023ISS-book}, for the local case), and  by \autoref{prop: safety system is ISS iff orignal is IS3}, the safety-critical system \eqref{eq: control system dynamics} is (locally) IS$^3$.  
\end{proof}

Such results are particularly useful for safe stabilization, especially when employing locally asymptotically stabilizing feedback control methods. This will be demonstrated in the following section through an application example using linear feedback control.

\vspace{-3mm}
\subsection{Robust Safe Feedback Control}
Given a feedback controller with some external disturbance $d \in \mathcal{D} \subset \ell^{\infty}$, i.e. $u=K(\bar{x}) + d(t)$, the closed-loop system is then of the form
\begin{equation} \label{eq: control system with disturbance d}
    \dot{x} = f\big(x, K(\bar{x}) + d \big)
\end{equation} 
We are interested in investigating \textit{feedback} controllers that are functions of the BaS in ensuring or improving ISSf, at least locally, through providing the controller with the \textit{safety status} of the system. What we mean by that is that the feedback controller would react aggressively when the BaS value is high which means that the system is close to the unsafe region and would react meekly when the BaS value is low which means that the system is away from the unsafe region. By applying the proposed ISSf and IS$^3$ conditions, one can design a robust, safe feedback control law that either accommodates known external disturbance bounds or manages disturbances up to a specified level. % A Lyapunov based feedback control \textit{redesign} is possible as proposed in \cite{romdlony2016newnotionISSf} based on the work of \textcite{sontag1989smooth} (also see \cite[Chapter~5, Theorem~5.4]{mironchenko2023ISS-book}). 

%\textcolor{red}{Introduce Lyap based controller for IS$^3$ as \cite{romdlony2016newnotionISSf} also see \cite[Chapter~5, Theorem~5.4]{mironchenko2023ISS-book}??????}

Let us consider the following instructive example as a case study, which is used in \cite{kolathaya2018ISSfCBF} to motivate the construction of input-to-state safe barrier functions. 

\subsubsection*{Case Study}
The dynamical system 
\begin{equation*}
    \dot{x}=-x+x^2u
\end{equation*}
is subject to the safety condition defined by $\mathcal{S}=\{x\in \mathbb{R}| 2-x >0\}$. We would like to verify if the system is safe despite some actuator disturbance and derive a condition based on \eqref{eq: input-to-state safety condition}. %For the moment, let $u=d(t)$, i.e. a pure input disturbance. 
Choosing an inverse barrier function, %, whose dynamics can be derived to be 
%\begin{equation} \label{eq: BaS dynamics for ISSf scalar example}
%    \dot{z} = \frac{-1}{h^2}h_x \dot{x} = (z+0.5)^2 (-x+x^2 u)
%\end{equation}
%In such a case, 
the safety embedded system is
\begin{equation} \label{eq: ISSf case study safety embedded system}
\begin{split}
%    & \dot{x}=-x+x^2K(x,z)+x^2d \\
%    & \dot{z} = (z+0.5)^2 (-x+x^2K(x,z)+x^2 d)
& \dot{x}=-x+x^2u \\
    & \dot{z} = (z+0.5)^2 (-x+x^2u)
\end{split}
\end{equation}
Using an initial condition of BaS to be $z(0)=\mathbf{B}\circ h\big(x(0)\big) - \mathbf{B}\circ h(0) = \frac{1}{2-x(0)}-\frac{1}{2}$, we can derive the BaS to be $z=\frac{0.5 x}{2-x}$. %Note that we dropped the $\gamma$ term since it is only a regularity term that is not needed for the analysis in this case. 
It can be easily verified that this system is globally asymptotically stable and safe when $u=0$. Nonetheless, It can be seen that for any initial condition such that $|xu| \geq 1$, the system's trajectories grow unbounded, i.e. neither stability nor safety is guaranteed. %$\forall x_0 \in \mathcal{S}$ the BaS in \eqref{eq: BaS dynamics for ISSf scalar example} is uniformly (asymptotically) stable $\forall |z| \leq \frac{1}{2(2|u|-1)}, \ \forall t \geq 0$, or equivalently $\forall |x| \leq \frac{1}{|u|}, \ \forall t \geq 0$. Therefore, the system is ISSf with $\alpha(r)=\frac{1}{2(2|r|-1)}$. In fact, the system is ISS $\forall |x| \leq \frac{1}{|u|}$, i.e. with $\alpha(r)=\frac{1}{|r|}$ and hence the system is IS$^3$ $\forall x_0 \in \mathcal{S}$ and $\forall t\geq0$.
Pursuing this further, we are interested in designing a feedback controller that insures, or improves, the ISSf property of the system given the existence of such a feedback control (as discussed in \autoref{remark: ISSf feedback control design}). Using the BaS embedding technique, the feedback controller can be a function of the BaS $z$. This is of a major benefit as the controller reacts proactively to push the system away from the unsafe region and will react more aggressively if the systems starts or gets closer to the unsafe region. It is worth mentioning that one option here is to select the controller to be the negative of the gradient of the safety function, i.e. zeroing barrier function, $K(x)=-x^2$ as suggested by \textcite{kolathaya2018ISSfCBF} which provides a \textit{local} ISSf notion. Nonetheless, it is worth noting that such a controller may be overly aggressive and destabilizes the system when the system is a way from the unsafe region, e.g. when $x_0\ll 0$. Furthermore, although it has the advantage of moving the system from the unsafe region to the safe region \cite{kolathaya2018ISSfCBF}, it still allows for constraint violation under some large disturbances. Our proposition is to consider a linear feedback of the BaS, i.e. $u=K(x,z) = -K_z z$. Note that such a linear feedback law, if it indeed satisfies the required conditions, corresponds to some Lyapunov function (by Lyapunov converse theorems \cite{khalil2002nonlinear}). Additionally, since we are after the ISSf property, we will deal with the BaS only at first. Under the disturbed feedback controller, $u=K(x,z)+d(t)$, we have
\begin{equation} \label{eq: ISSf case study safety embedded system with controller}
\begin{split}
%    & \dot{x}=-x+x^2K(x,z)+x^2d \\
    & \dot{z} = (z+0.5)^2 (-x+x^2K(x,z)+x^2 d)
\end{split}
\end{equation}
%Specifically, the BaS equation is given by
%\begin{equation} \label{eq: ISSf case study BaS with safe feedback}
%    \dot{z} = (z+0.5)^2 (-x+K_z x^2 z+x^2 d)
%\end{equation}

\begin{figure}[t]
        \centering
    \includegraphics[trim=20 0 20 0, clip, width=0.9\linewidth]{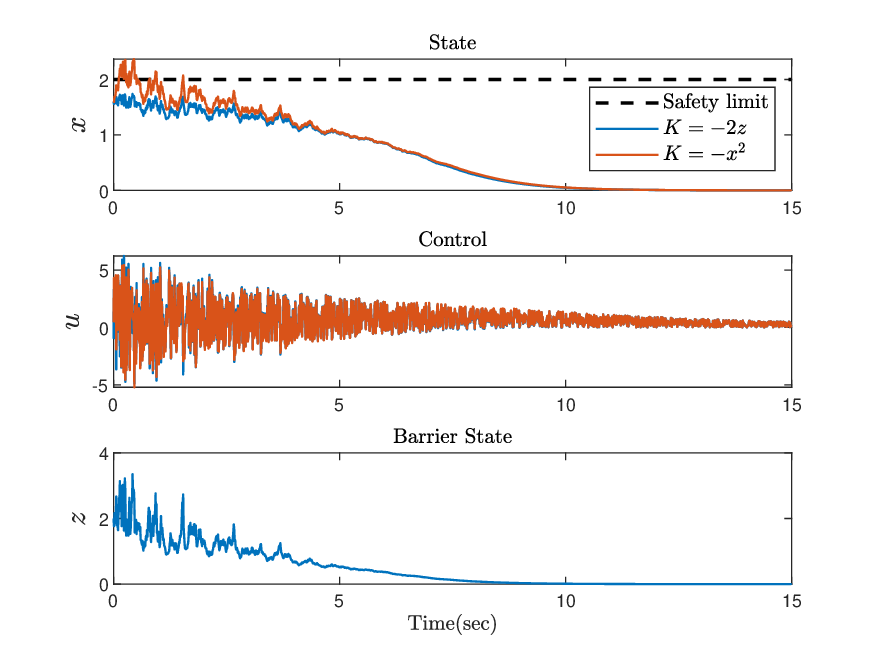}
    \vspace{-3mm}
      \caption{Simulations of the closed-loop system under the BaS feedback controller $K=-2z$ and a state feedback controller $K=-x^2$. Shown here are the state's progression, the control action and the BaS progression over time.}
      \label{fig: ISSf case study example}
      \vspace{-6mm}
\end{figure}

Let us consider $V=\frac{1}{2}z^2$. Then, the derivative of $V$ along the BaS trajectories is
\begin{equation*}
    \dot{V} = z (z+0.5)^2 (-x+x^2K(x,z)+x^2 d)
\end{equation*}
which can be rewritten as
\begin{equation*}
    \dot{V} = 4 K(x,z) z^3 + (4d-2) z^3  - z^2
\end{equation*}
Assuming a negative feedback $K(x,z)=-K_z z$, $K_z>1$, of the barrier state as the controller yields
\begin{equation*}
     \dot{V} = - 4K_z z^4 + (4d-2) z^3  - z^2 \leq -8V(z)p(z)
\end{equation*}
%It is worth mentioning that one option here is to select the controller to be the negative of the gradient of the barrier with respect to $u$, i.e. $K(x,z) = -z^2$, which is the suggested control law by \textcite{romdlony2019robustness} and is used for the zeroing barrier function for this example by \textcite{kolathaya2018ISSfCBF}, i.e. $K(x)=-x^2$. Nonetheless, it is worth noting that such a controller may be overly aggressive. In the case of $K(x)=-x^2$, it could be too aggressive and destabilizes the system when the system is a way from the unsafe region, e.g. when $x_0\ll 0$. Furthermore, although it has the advantage of moving the system from the unsafe region to the safe region, it still allows for constraint violation under some large disturbances. On the other hand, the BaS feedback controller $K(x,z) = -z^2$, could be too aggressive when it is very close to the unsafe region which may not be desirable.
%This implies that we can ensure uniform stability of the BaS if we design $K_z$ such that
%\begin{align*}
%    K_z x^2 z + x^2 ||d||_{\infty} - x \leq 0 \Rightarrow K_z \leq \frac{1-||d||_{\infty}}{zx}
%\end{align*}
where $p(|z|):=K_z|z|^2-(|d|+0.5)|z|+0.25$. It can be seen that if $|d|\leq 0.5$, the quadratic polynomial $p(|z|)$ is minimized at $|z|=z_{min}=\frac{|d|+0.5}{2K_z}$ and 
$$
p(z_{\min})=0.25(K_z-(|d|+0.5)^2)/K_z\geq 0.25(K_z-1)/K_z
$$
Selecting $|z|\geq \frac{|d|+0.5}{K_z}$ on the other hand guarantees that $p(|z|) \geq  0.25 > 0.25(K_z-1)/K_z$) for $|d|\geq 0.5$. The function
$$
\alpha_u(|d|) := \begin{cases} 2 |d|/K_z & , \ |d| <0.5 \\
(|d| + 0.5)/K_z & , \ |d| \geq 0.5
\end{cases}
$$
is continuous, strictly increasing, and zero at $d=0$, hence a class ${\mathcal{K}}$ function. Furthermore, as was shown above, for $|z|> \alpha_u(|d|)$, we have $p(z)>0.25(K_z-1)/K_z>0$, and consequently $\dot{V} (z) \leq -2(K_z-1)V(z)/K_z$, $\forall |d|\geq 0$
Therefor, by \cite[Theorem~4.19]{khalil2002nonlinear}, 
\begin{equation*}
     ||z(t)|| \leq \tilde{\alpha}_z(||z_0||, t) + \alpha_u (||d||_{\infty})
\end{equation*}
and by \autoref{def: ISSf}, the system is ISSf with $\alpha_u(d)$ as defined above.
%In fact, the origin of the barrier state is exponentially stable. 
Hence, when $d(t) \rightarrow 0$ as $t\rightarrow \infty$, $z(t) \rightarrow 0$. Consequently, for this system, the state goes to the origin. In fact, the safety embedded system \eqref{eq: ISSf case study safety embedded system} can be shown to be ISS using the bounds derived above for the barrier state and the disturbance. Therefore, the system is input-to-state safely stable (IS$^3$) by \autoref{prop: safety system is ISS iff orignal is IS3}. It  must be noted, however, that the derived bounds are conservative as the used Lyapunov-like theorem \cite[Theorem~4.19]{khalil2002nonlinear} provides a sufficient, but not necessary, condition. 

\autoref{fig: ISSf case study example} shows a random run for the problem with an initial condition $x_0=1.6$ and a relatively large, uniformly distributed and decreasing random disturbance, with $||d||_{\infty}=9.585$. Note that even if the disturbance is not convergent to zero, the proposed feedback controller still ensures ISSf of the system since it is enough to have the BaS being uniformly stable and not necessarily uniformly asymptotically stable. It is worth mentioning that we designed the controller and then quantified the ISSf bounds but knowing the bounds of the disturbance can also help us designing a safe feedback controller with a desired robustness performance.

\begin{remark}
    One can derive parallel arguments to the proposed development through Lyapunov analysis and Lyapunov ISSf arguments which also allow us then to design barrier-Lyapunov based feedback controllers using a gradient-based control such as Sontag's Lyapunov \textit{redesign} or Sontag's universal control law \cite{sontag1989smooth,Sontag2008issconcepts,sontag1996new} (see also \cite[Chapter~5, Theorem~5.4]{mironchenko2023ISS-book}) as done in \cite{romdlony2016newnotionISSf,kolathaya2018ISSfCBF}. Indeed, one can directly use Lyapunov converse and converse-like theorems to achieve the same established results above. It is definitely interesting to investigate the developments of Lyapunov arguments and control Lyapunov functions in the context of barrier states and safety embedded systems and the connection to the work in \cite{romdlony2016stabilization,romdlony2019robustness,ames2016CBF-forSaferyCritControl,tee2009barrier_Lyap_automatica}. Nonetheless, such investigations and extensions require thorough examination and are better suited for separate study. %Therefore we sought to avoid the use of Lyapunov arguments and Lyapunov based controls in our development in this paper which we leave to be studied in a separate work.
\end{remark}

\vspace{-3mm}
\section{Applications Examples} \label{sec: Applications Examples}

\subsection{Safety Embedded Linear Stabilization} \label{subsec: Safety Embedded Linear Stabilization}
In this example, let us consider the constrained and open-loop unstable system
\begin{equation*}
    \dot{x} = \begin{bmatrix} \dot{x}_1 \\ \dot{x}_2 \end{bmatrix}= \begin{bmatrix} 1 & -5 \\ 0 & -1\end{bmatrix} \begin{bmatrix} x_1 \\ x_2 \end{bmatrix} + \begin{bmatrix} 0 \\ 1\end{bmatrix} u
\end{equation*}
subject to the coupled nonlinear safety condition
\begin{equation*}
    h(x) = (x_1 - 2)^2 + (x_2 - 2)^2 - 0.5^2 >0
\end{equation*}
and the performance objective of closed-loop system's poles at $-2$ and $-3$. 

To solve this multi-objective problem through the proposed technique, we first select the inverse barrier, and therefore $\beta = \frac{1}{h(x)}$. Using the proposed technique, with $\gamma=1$ the barrier state equation is given by
\begin{equation}
    \dot{z} = -(z+\beta_0)^2 [2(x_1-2) \ \ 2(x_2-2)] \begin{bmatrix} x_1 -5x_2  \\ -x_2 + u  \end{bmatrix} - \big(z+\beta_0 - \beta(x) \big)
\end{equation}
% or
%\begin{equation}
%    \dot{z} = -(z+\beta_0)^2 \big(x_1^2 - x_2^2 - 5 x_1 x_2 -2x_1 + 12 x_2 +x_2 u -2u\big) - \big(z+\beta_0 - \beta(x) \big)
%\end{equation}
Augmenting this BaS to the system and linearizing the safety embedded system, yields the safety embedded linearized system
\begin{equation}
    \dot{\bar{x}}=  \begin{bmatrix} 1 & -5 & 0\\ 0 & -1 & 0 \\ \frac{8}{7.75^2} & \frac{-20}{7.75^2} & -1 \end{bmatrix}  \bar{x} +   \begin{bmatrix} 0 \\ 1 \\ \frac{4}{7.75^2} \end{bmatrix} u
\end{equation}
where $\bar{x}=[x_1 \ \ x_2 \ \ z]^{\top}$.
To achieve the performance objectives such that the system's states have the closed-loop poles at $-2$ and $-3$, we resort to using the well known pole placement method. It is important to note that the pole-placement solution is not unique, i.e. there might be other solutions that can place the poles at the desired locations but provides an unsafe behavior or a weak performance, e.g. sensitive solution to nonlinearities and hence the local solution provides a small region of attraction. We use the robust pole assignment method in \cite{kautsky1985robustpole} used by \verb|MATLAB| command \verb|place|, which provides a solution with a minimized sensitivity with a maximized stability margin. This algorithm is very helpful in our context of safety and linear control for an inherently nonlinear control problem. The control law to \textit{safely} stabilize the system is computed to be 
\begin{equation} \label{eq: safety embedded pole placement control}
u = 2.1143 x_1 - 5.2857 x_2 + 4.2902 z
\vspace{-2mm}
\end{equation}

It is worth noting that although the controller is linear with respect to the linearized safety embedded system, it is a nonlinear function of the original state $x$. A phase portrait of the closed-loop system in the $x_1 \times x_2$ state space is shown in \autoref{fig: linear control example - safe stabilization}. It can be seen that the trajectories and the closed-loop vector field turn around to avoid the unsafe region shown as a red circle. This shows that indeed the designed linear controller is able to \textit{safely stabilize} the system successfully achieving safety and performance objectives.

\begin{figure}[t]
    \centering
    \includegraphics[trim=20 0 20 0, clip, width=0.9\linewidth]{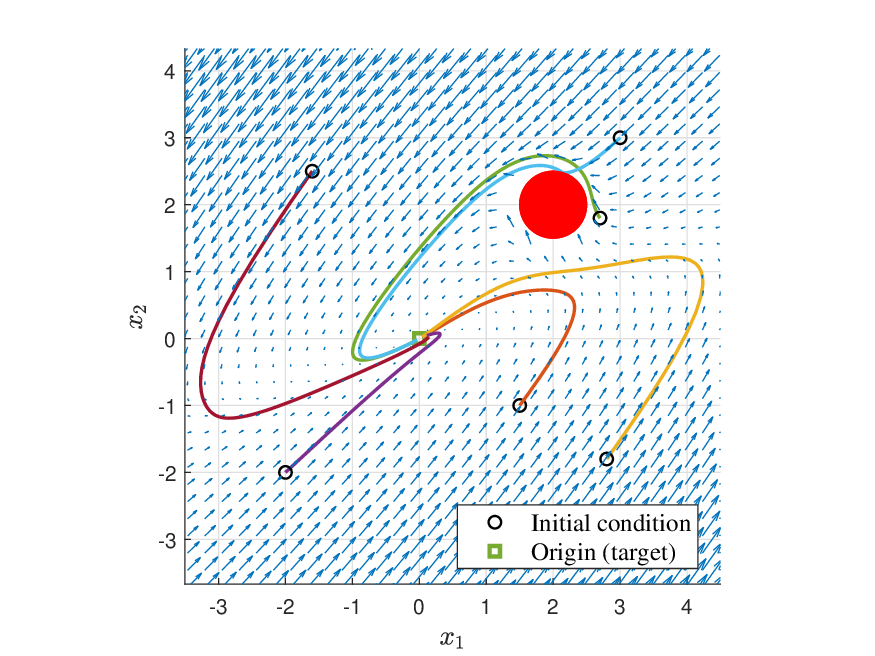}
    \\
    \includegraphics[trim=20 0 20 0, clip, width=.7\linewidth]{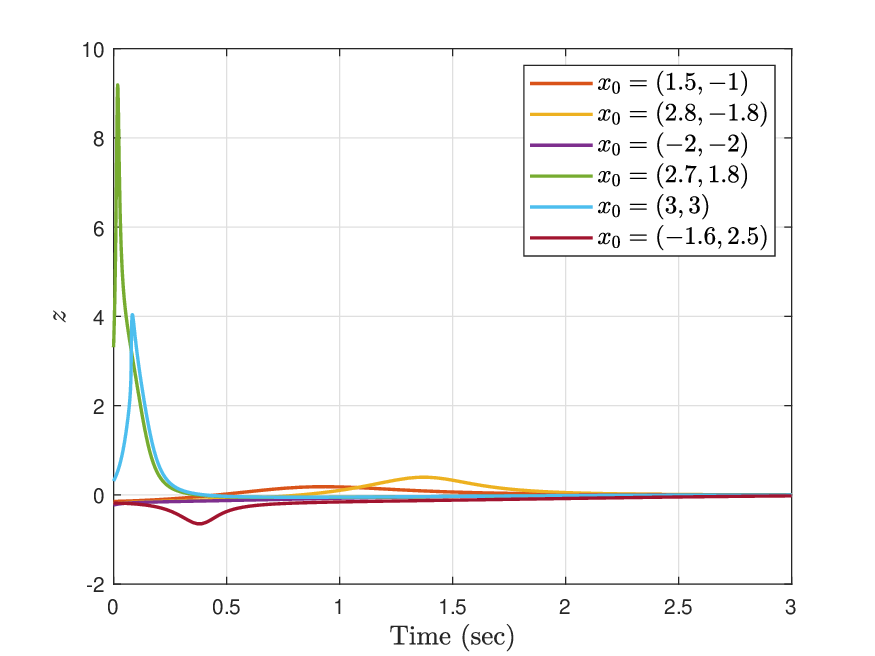}
    \caption{A phase portrait of the closed-loop system for the stabilization example under the safety embedded controller $u = 2.1143 x_1 - 5.2857 x_2 +4.2902 z$ along with some trajectories starting from different initial conditions. The bottom figure shows BaS progression over time for the different trajectories.}
      \label{fig: linear control example - safe stabilization}
      \vspace{-6mm}
\end{figure}

\vspace{-2mm}
\subsection{Input Constrained Safety Embedded Linear Stabilization} \label{subsec: Input Constrained Safety Embedded Linear Stabilization}
In practice, control power is limited by the physics of the system or by design for some objectives, i.e. $\mathcal{U}$ is bounded. Let us consider the preceding example with some input constraints. Specifically, the admissible control set is bounded by $(-5,5)$. %This can be translated as 
%\begin{equation*}
%    -5 < u < 5
%\end{equation*}
Then, the safety condition can be written as 
\begin{equation*}
    g_1(u) = 5 - u > 0 \qquad , \ g_2(u) = u+5 > 0
\end{equation*}
Note that one could define a single quadratic function for the constraint and then define a single BaS but we choose to define two linear constraints to define multiple BaS, i.e. a BaS for each constraint, for illustration purposes. Following the proposed technique in \autoref{subsec: input BaS} and the same procedure in the foregoing example, we define three barrier states $z_1, \ z_2$ and $z_3$ for $g_1, \ g_2$, and $h$ respectively with $\gamma_1=1.8, \ \gamma_2 = 1.4, \ \gamma_3=1$. The linearized safety embedded system is given by 
\begin{equation*}
    \dot{\bar{x}}=  \begin{bmatrix} 1 & -5 & 0 & 0 & 0 & 0 \\ 0 & -1 & 1 & 0 & 0 & 0 \\ 0 & 0 &0 & 0 & 0 & 0 \\ 0 & 0 & \frac{1.8}{5^2} & -1.8 & 0 & 0 \\ 0 & 0 & -\frac{1.4}{5^2} & 0 & -1.4 &  0 \\ 
    \frac{8}{7.75^2} & \frac{-20}{7.75^2} & \frac{4}{7.75^2} & 0 & 0 & -1 \end{bmatrix}  \bar{x} +
    \begin{bmatrix} 0 \\ 0 \\ 1 \\ \frac{1}{5^2} \\ -\frac{1}{5^2} \\ 0 \end{bmatrix} v
\end{equation*}
where $\bar{x} = [x_1, \ x_2, \ u, \ z_1, \ z_2, \ z_3]^{\top}$. Pole placement is used again to achieve the performance objective of having closed-loop poles of $-2$ and $-3$. Note that now we have the new control state equation $\dot{u} = v$ which we need to assign a pole for. We select the desired pole to be $-10^3$ for numerical stability. It is noteworthy that in this case the relative degree of the system with respect to the barrier function corresponding to the obstacle is elevated. That does not affect our formulation, however, and the system is still stabilizable. The computed control law required to stabilize the aforementioned embedded system with the desired closed-loop poles is $u = -K \bar{x}$ where
\begin{equation} \label{input limits aware control example}
    K = [-2.12   \  5.29 \  -.101 \   2.67  \ -24.97 \  -4.29] \times 10^3
\end{equation}
\autoref{fig: linear control example - input constraint} shows the closed-loop response starting from some initial conditions at which excessive control inputs beyond the allowed limit is needed to safely stabilize the system in the preceding example without input constraints. The designed controller effectively stabilizes the system while adhering to control constraints. It executes conservative, yet secure, actions which result in a slightly longer time to achieve stabilization. 

\begin{figure}[t]
    \centering
    \includegraphics[trim=20 0 20 0, clip, width=.7\linewidth]{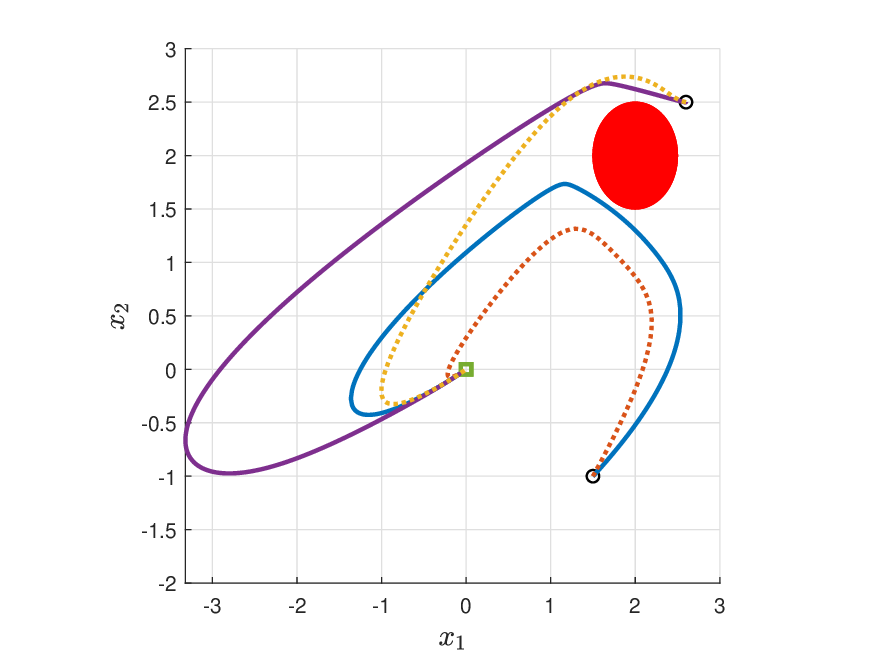}
    \\
    \includegraphics[trim=20 0 20 0, clip, width=.8\linewidth]{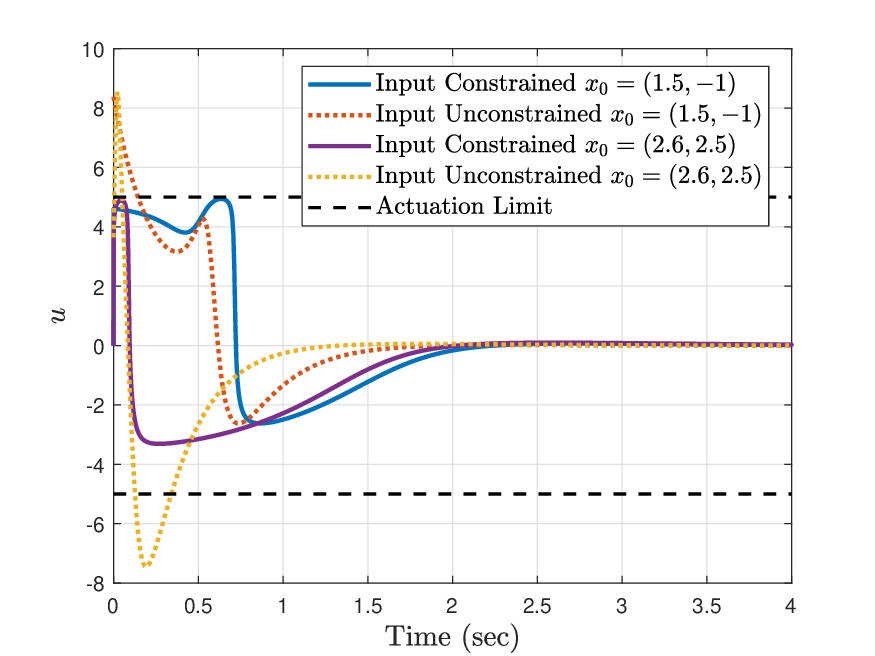}
    \caption{Numerical simulations of the closed-loop system under the input-limits aware safety embedded controller in \eqref{input limits aware control example} (solid) compared against the safety embedded controller \eqref{eq: safety embedded pole placement control} (dotted) that does not take input constraints into account.}
      \label{fig: linear control example - input constraint}
      \vspace{-6mm}
\end{figure}

\vspace{-3mm}
\subsection{Adaptive Cruise Control via Barrier States}
With a cruise control system, the equipped automobile is to maintain a desired steady speed automatically regardless of other factors such as distance to vehicles ahead. With an adaptive cruise control (ACC), an advanced driver-assistance system (ADAS), the control system maintains the desired speed when possible and automatically adjust the speed when needed to maintain a safe distance from vehicles ahead. ACC systems are very popular and a key part of essential low level automating systems in autonomous vehicles.

The safety constraint for this problem can be described through the time-headway, i.e. the time interval between successive vehicles passing a point in a traffic stream. For this application, we adopt the problem in \cite{ames2016CBF-forSaferyCritControl}, i.e. the dynamics, the performance objective, the safety condition and the parameters, etc. The safety constraint is formulated as 
\begin{equation}
    \frac{D}{v_f} > \tau_d
\end{equation}
where $D$ is the distance between two consecutive vehicles (meters), $v_f$ is the velocity of the following, i.e. controlled, vehicle (m$/$s), and $\tau_d$ is the desired time headway that is set to be $1.8$ seconds according to the rule of thumb of allowing a minimum distance of half the speedometer of the following vehicle \cite{vogel2003comparison_headwaytime}. 

For cruise control systems, various robust PID control schemes were deployed in the early 1980s when the microprocessor technology revolutionized vehicles' systems \cite{shaout1997cruise}. PID controls are commonly used in various other vehicle control systems due to their simplicity and ease of achieving performance characteristic such as response time, rise time, steady state error, etc. and their fast and stable responses. For the ACC problem,  we develop a PID controller for the cruise control problem with barrier states augmentation to develop an adaptive cruise controller. We show that the designed controller, termed proportional-integral-derivative-barrier (PIDB), is capable of simultaneously satisfying safety and stability (performance) conditions when possible and can enforce the safety constraint, by design, when the two objectives are not simultaneously achievable. It is worth noting that one could design a PIB controller, i.e. without the derivative control element. We choose to include the derivative controller for mainly two reasons. The first is to provide a smooth transient especially when the vehicle needs to dramatically changes its speed for safety. The second is to show that the proposed method works well even with systems with a high-relative degree without a need of further considerations. 

The system considered in this problem consists of a following vehicle and a leading one in which the following vehicle is the controlled one with an ACC system to cruise with a desired speed while ensuring a safety distance from the leading vehicle. The dynamics of the system is given by
\begin{equation}
    \begin{bmatrix} \dot{v}_l \\  \dot{v}_f \\ \dot{D} \end{bmatrix} = \begin{bmatrix}
        a_l \\ \frac{1}{M_f} (F -f_0 - f_1 v_f - f_2 v^2_f) \\ v_l - v_f
    \end{bmatrix}
\end{equation}
where $v_l$ is the leading vehicle's velocity (m$/$s), $a_l$ is the leading vehicle's acceleration (m$/$s$^2$), $M_f$ is the following vehicle's mass (kilograms), $f_0$, $f_1$ and $f_2$ are the aerodynamic drag constants, and $F$ is the following vehicle's wheel force (Newtons), i.e. the control input. 

\begin{figure}[t]
    \centering
    \hspace{-.45cm}
    \includegraphics[trim=15 0 32 20, clip, width=.51\linewidth]{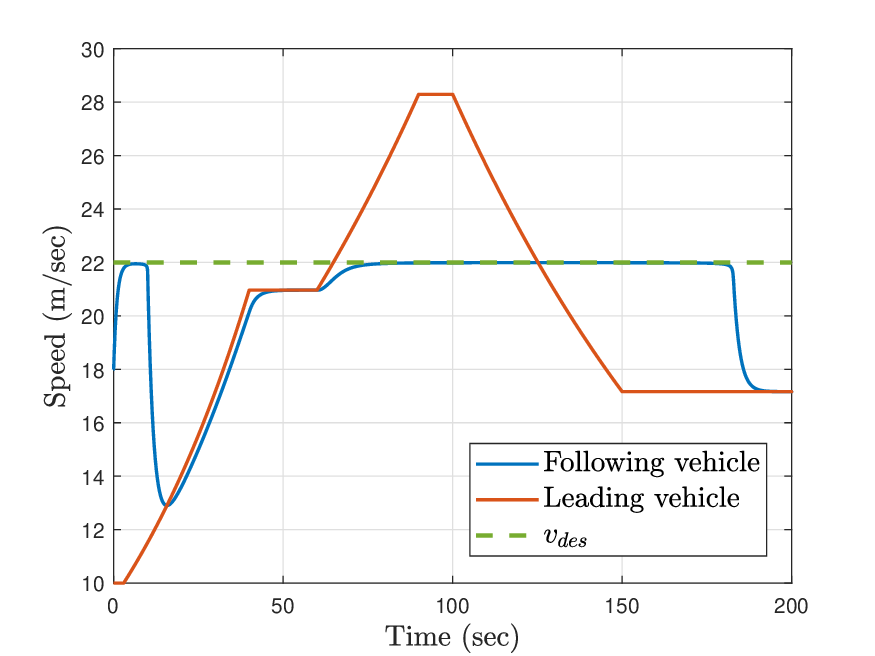}
    \includegraphics[trim=17 0 32 20, clip, width=.51\linewidth]{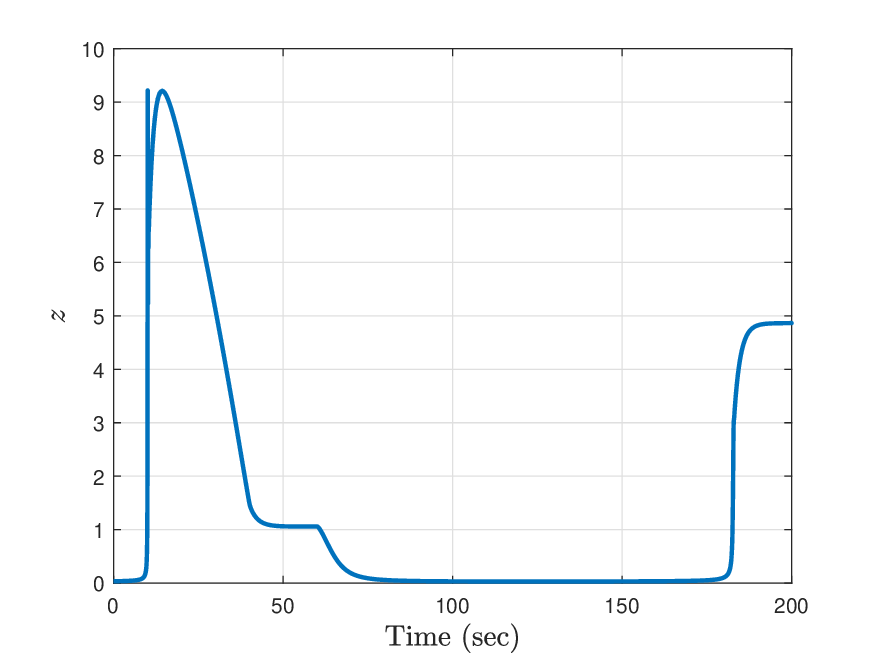}
    \\
    \hspace{-.45cm}
    \includegraphics[trim=15 0 32 20, clip, width=.51\linewidth]{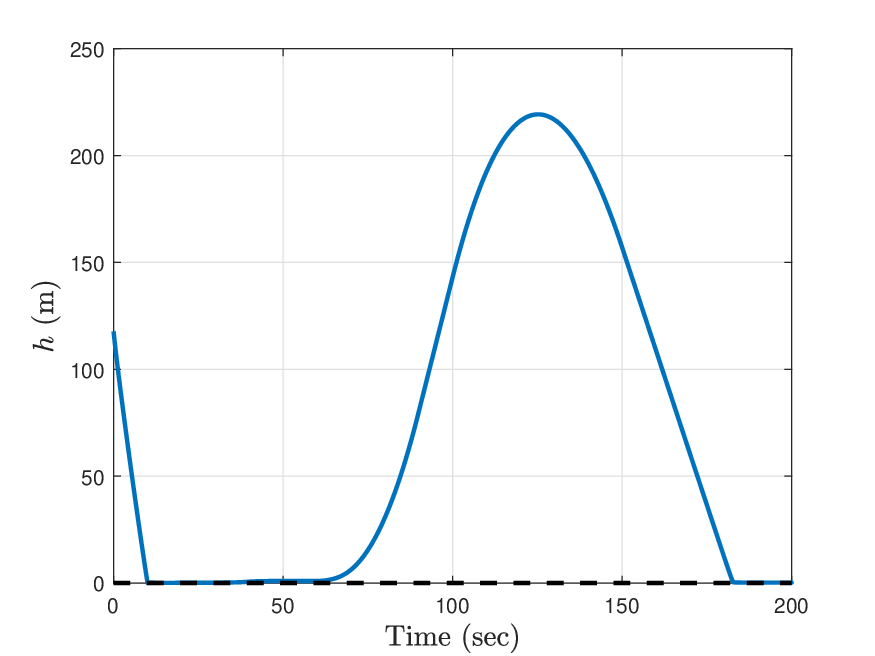}
    \includegraphics[trim=15 0 32 20, clip, width=.51\linewidth]{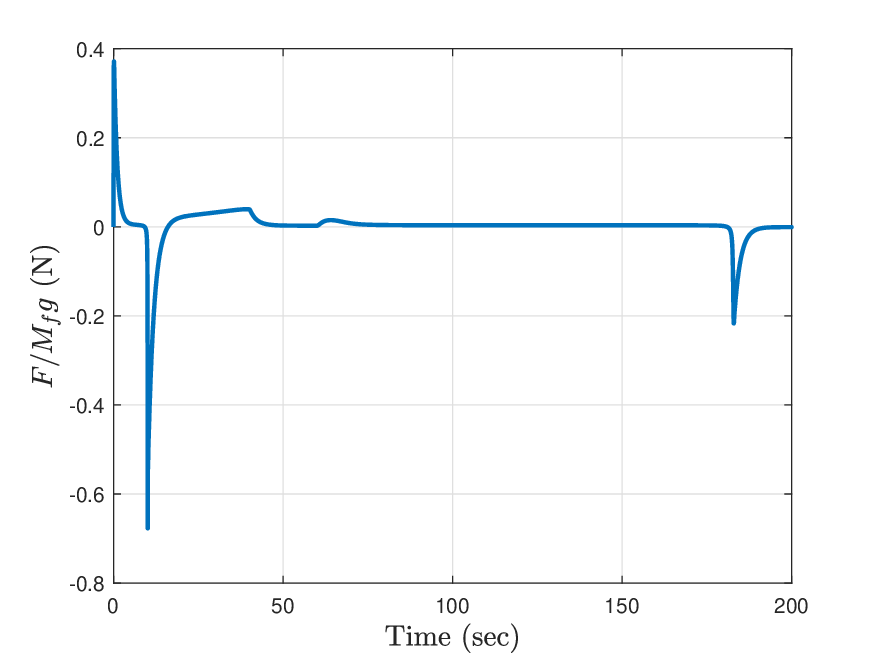}
    \caption{Adaptive cruise control (ACC) simulation results derived by a PIDB controller with high proportional and derivative control gains. Top-left figure shows the speed of the following, controlled, vehicle (blue) driving at the desired speed (dashed green) when possible and slowing down to ensure a safe time-headway (bottom-left) from the leading vehicle. Top-right figure shows the profile of the barrier state over time which stays bounded indicating forward invariance of the safe set. Bottom-right figure shows the controlled vehicle acceleration as a function of the Earth gravitational force $g$.}
      \label{fig: ACC PIDB - large gains}
      \vspace{-6mm}
\end{figure}

The performance objective is to regulate the leading vehicle's speed at a desired velocity $v_d$. 
To achieve the performance objective and ensure robustness to parametric perturbations, we introduce an integral action by augmenting the dynamics with an integral state of the regulation error, $e = v_f - v_d$. Consequently, derivative action is introduced by augmenting the rate of change of the error, $ \dot{e} $, whose dynamics correspond to the acceleration of the following vehicle, 
\begin{equation} 
\ddot{v}_f = \frac{1}{M_f} ( u - f_1 \dot{v}_f - f_2 v_f \dot{v}_f )
\end{equation}
where $u = \dot{F}$ which we need to integrate then to produce the control input $F$. The problem now can be expressed by the following dynamics,
\begin{equation}
   \dot{x} = f(x,u) = \begin{bmatrix}
        a_l \\ x_5 \\ x_1 - x_2 \\ x_2 - v_d \\ \frac{1}{M_f} (u - f_1 x_5 - f_2 x_2 x_5)
    \end{bmatrix}
\end{equation}
where $x = [x_1, \ x_2, \ x_3, \ x_4, \ x_5]^{\top} = [v_l, \ v_f, \ D, \ \int e, \ \dot{v}_f]^{\top}$, subject to the safety condition 
\begin{equation}
    h = D - \tau_d v_f = x_3 - \tau_d x_2 > 0 
\end{equation}

To enforce the safety constraint, we introduce the barrier $\beta = \frac{1}{h}$, i.e. $\mathbf{B}$ is the inverse barrier function. Then, we define the barrier state $z$ as developed in \autoref{sec: Barrier States} and derive the state equation 
\begin{equation*} \begin{split}
    \dot{z} 
    = -(z+\beta_0)^2 (-\tau_d x_5 + x_1 - x_2 ) - (z+\beta_0 - 
    \frac{1}{x_3 - \tau_d x_2} ) 
\end{split} \end{equation*}
It is worth noting that the control input does not show up in the BaS equation, i.e. the relative degree is higher than 1. Nonetheless, no special consideration needs to be taken. Augmenting the BaS to the dynamics results in the safety embedded system 
\begin{equation*}
\hspace{-3mm}   \dot{\bar{x}} = \begin{bmatrix}
        a_l \\ x_5 \\ x_1 - x_2 \\ x_2 - v_d \\ \frac{1}{M_f} (u - f_1 x_5 - f_2 x_2 x_5) \\ -(x_6+\beta_0)^2 (-\tau_d x_5 + x_1 - x_2 ) - (x_6+\beta_0 - 
    \frac{1}{x_3 - \tau_d x_2} )
    \end{bmatrix}
\end{equation*}
where $\bar{x} = [x, \ x_6]^{\top} = [x, \ z]^{\top}$ and $\beta_0 = \frac{1}{h(x_d)}$. For this example, the system's parameters, adopted from \cite{ames2016CBF-forSaferyCritControl}, are $M_f=1650$ kg, $g=9.81$ m$/$s$^2$,  $f_0=0.1$ N, $f_1=5$ N s$/$m, $f_2=0.25$ N s$^2/$m$^2$, the desired speed is set to $22$ m$/$s and the initial conditions are $(v_l(0)=10, v_f(0)=18,D(0)=150)$. 

\begin{figure}[t]
    \centering
    \hspace{-.45cm}
    \includegraphics[trim=15 0 32 20, clip, width=.51\linewidth]{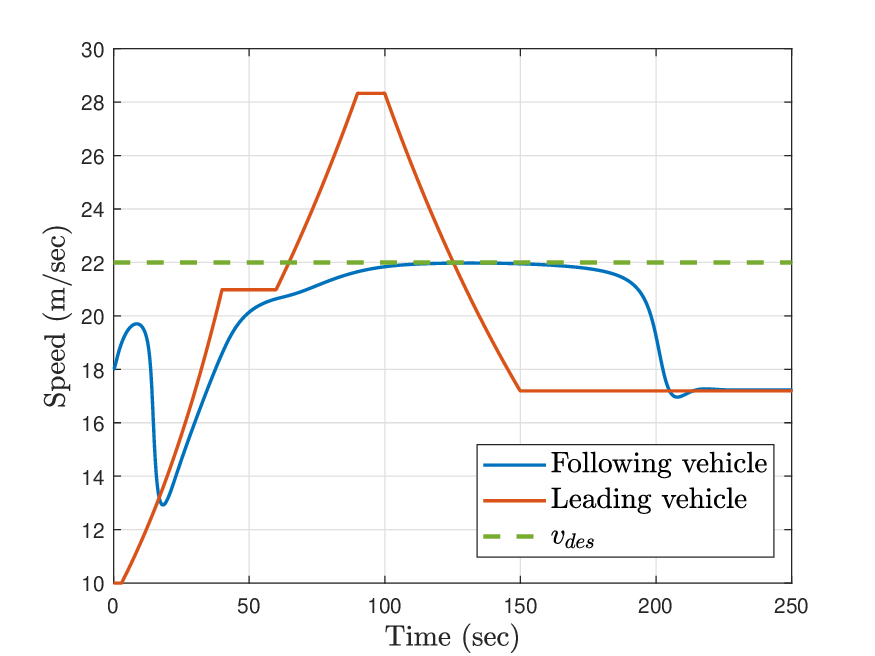}
    \includegraphics[trim=17 0 32 20, clip, width=.51\linewidth]{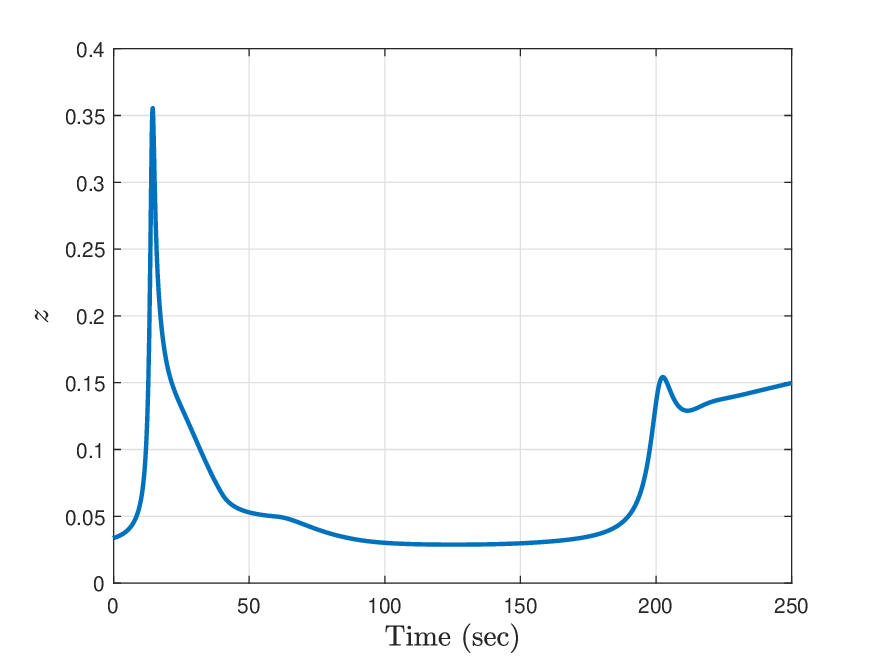}
    \\
    \hspace{-.45cm}
    \includegraphics[trim=15 0 32 20, clip, width=.51\linewidth]{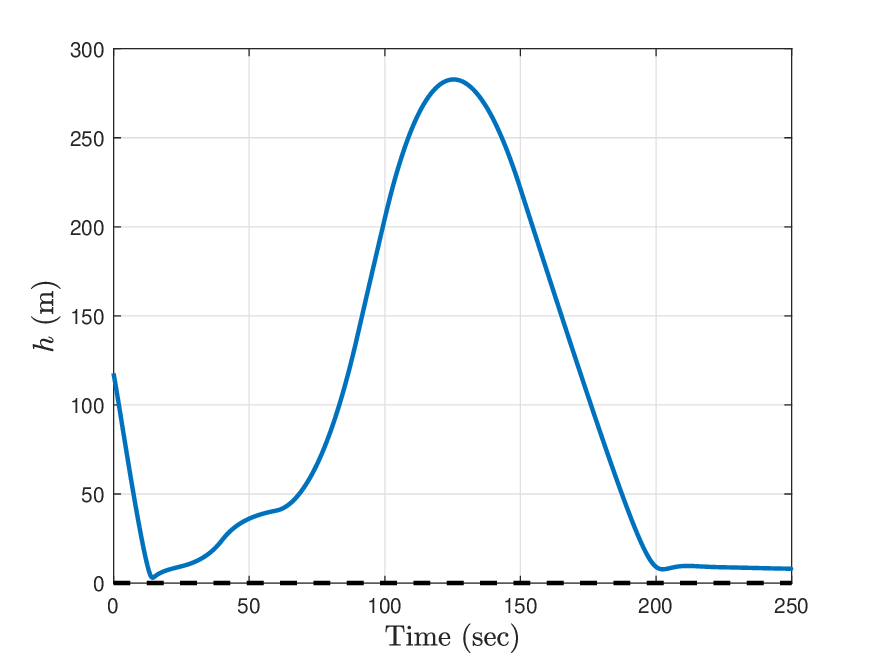}
    \includegraphics[trim=15 0 32 20, clip, width=.51\linewidth]{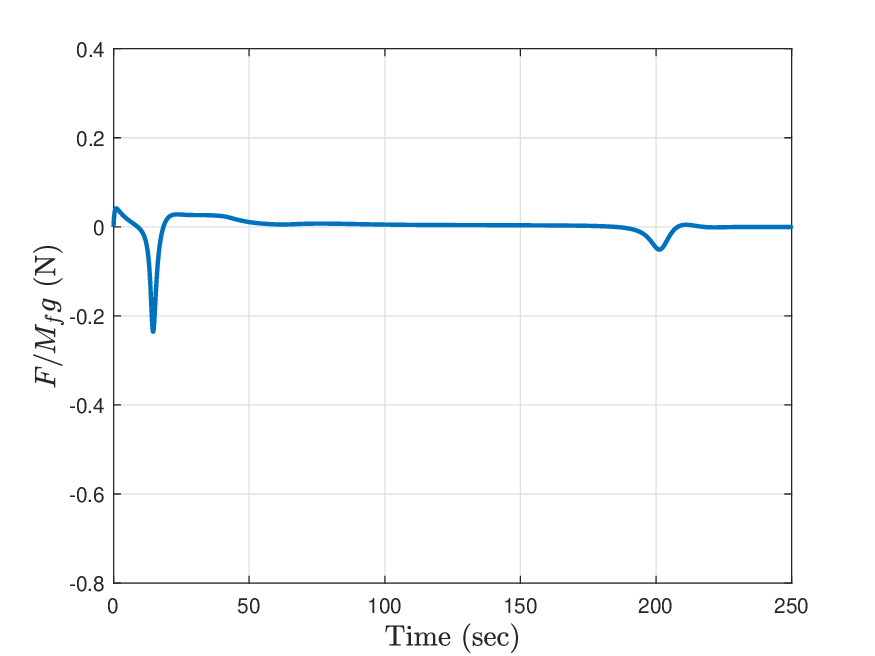}
    \caption{ACC simulation results under a PIDB controller with moderate proportional and derivative control gains for less aggressive and smoother ACC behavior.}
      \label{fig: ACC PIDB - smaller gains}
      \vspace{-4mm}
\end{figure}

We define our PIDB controller $u=-K\bar{x} = -(K_\text{P} (x_2-v_d) + K_\text{I} x_4 + K_\text{D} x_5 + K_\text{B} x_6)$. Selecting the feedback gains to be $K = K_1 = [K_\text{P} \ K_\text{I} \ K_\text{D} \ K_\text{B}] = [50000 \ 5 \ 50000 \ 50000]$ assigns the eigenvalues of the controlled states of the PIDB system to $-29.2765$, $-2.1121$, $-1.0349$ and $-0.0004$. \autoref{fig: ACC PIDB - large gains} shows an effective ACC performance in which the controlled vehicle cruises at the desired speed when possible, i.e. there is a safe time-headway, and slows down when to keep a safe distance when it is not. The vehicle cruises at a speed similar to the leading vehicle to maintain the safe distance when it is very close to it.

However, this controller might be a little aggressive. Tuning the PIDB gains to provide a less aggressive behavior is of course possible, e.g. by selecting the feedback gains to be $K = K_2 = [K_\text{P} \ K_\text{I} \ K_\text{D} \ K_\text{B}] = [1000 \ 5  \ 5000  \ 50000]$ which results in the closed-loop eigenvalues $-2.7977$, $-2.133$, $-0.1987$ and $-0.0217$. We kept the BaS feedback gain high as it helps in making the closed-loop system more conservative and less aggressive in satisfying the desired velocity, i.e. less control input to be used for example for comfort, as shown in \autoref{fig: ACC PIDB - smaller gains}.

\vspace{-2mm}
\subsection{Input-to-State Safely Stable Adaptive Cruise Control}
\begin{figure}[t]
    \centering
    \hspace{-.45cm}
    \includegraphics[trim=15 0 32 20, clip, width=.51\linewidth]{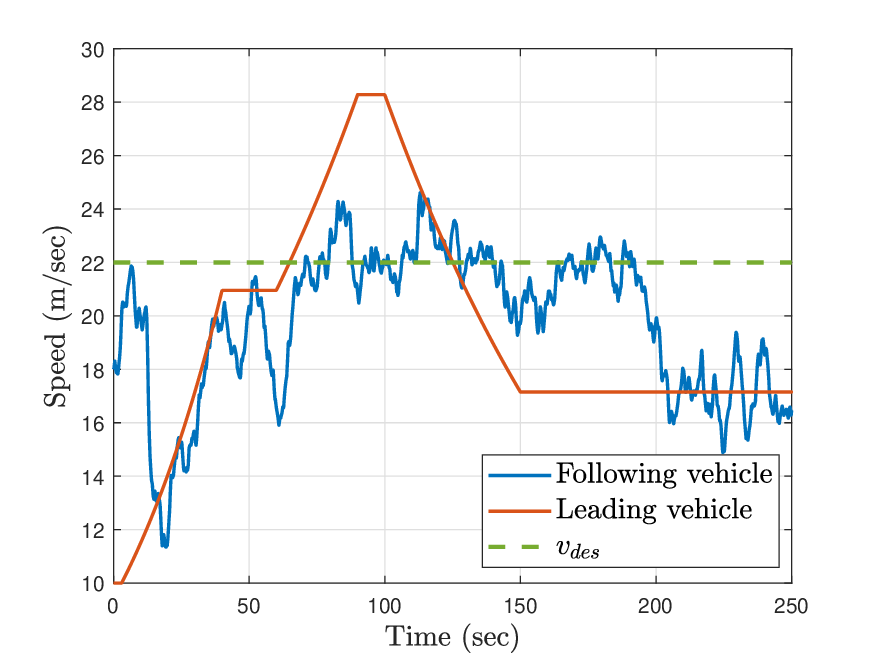}
    \includegraphics[trim=17 0 32 20, clip, width=.51\linewidth]{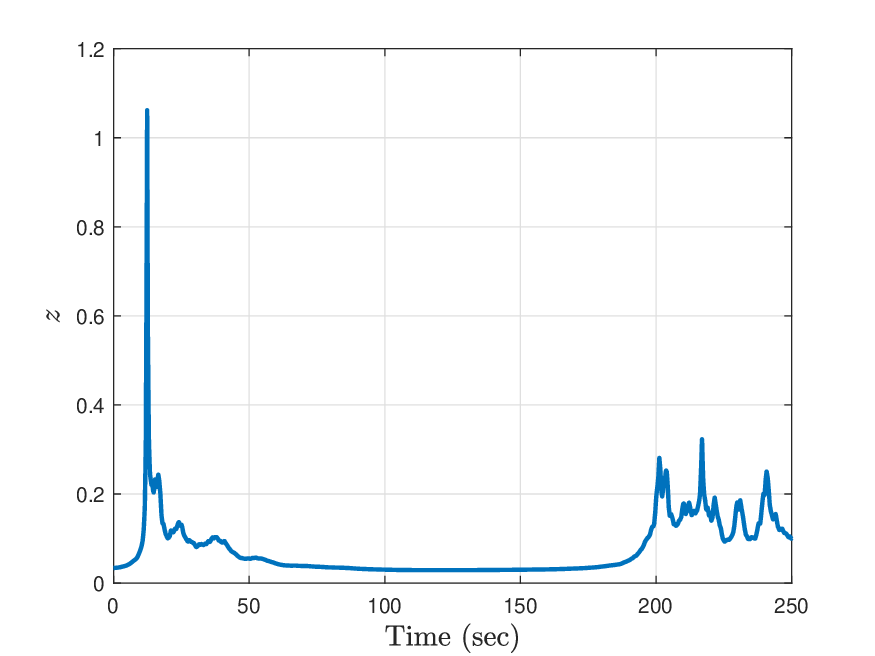}
    \\
    \hspace{-.45cm}
    \includegraphics[trim=15 0 32 20, clip, width=.51\linewidth]{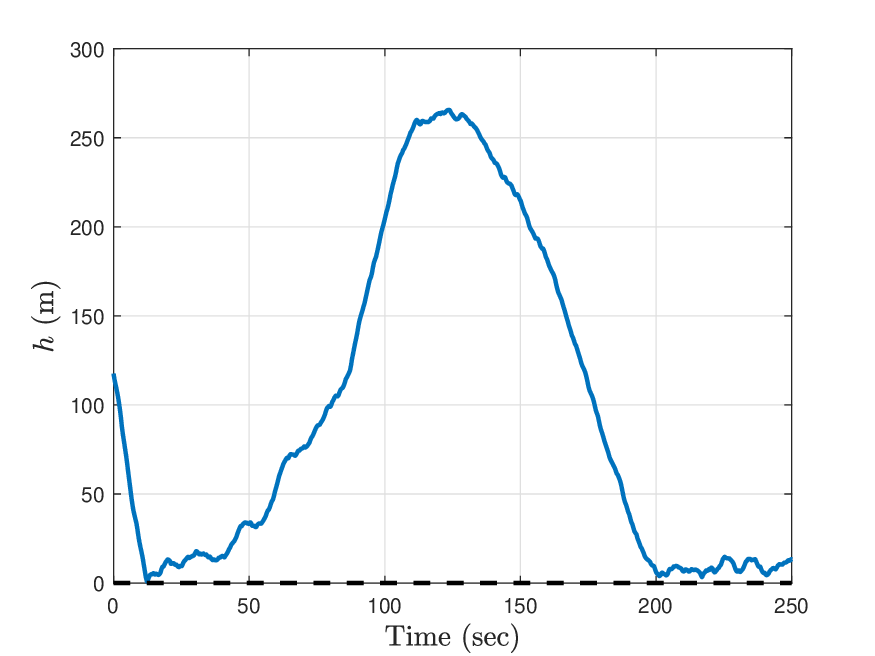}
    \includegraphics[trim=15 0 32 20, clip, width=.51\linewidth]{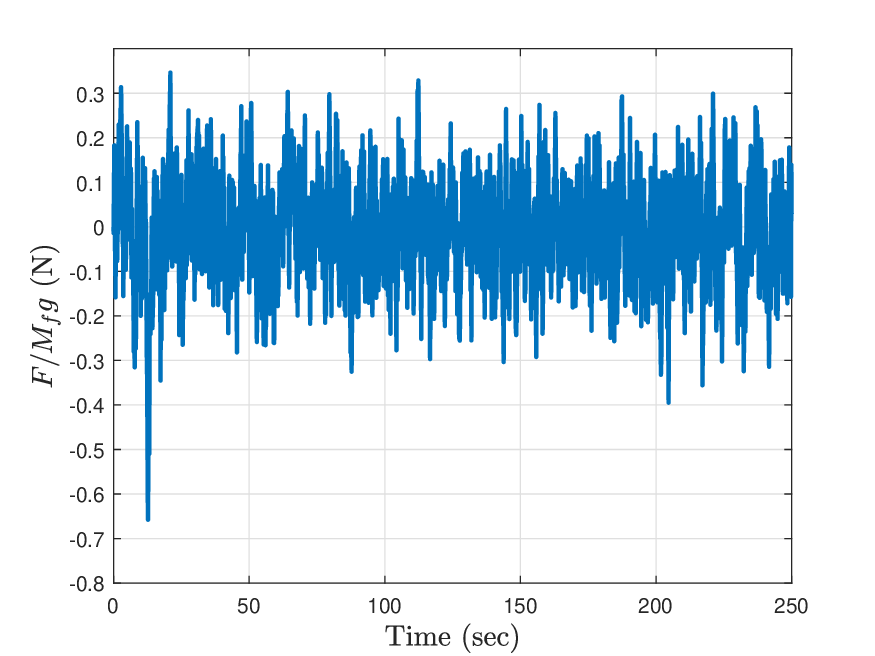}
    \caption{ACC simulation results with input disturbance under a PIDB controller that renders the system IS$^3$. As shown, the BaS value gets very high when the noise pushed the system towards the unsafe limit which is fed-back to the controller to counter act the disturbance for safety.}
      \label{fig: ACC PIDB - large noise - smaller gains}
      \vspace{-5mm}
\end{figure}

As a direct result of the proposed technique in \autoref{sec: input-to-state safe stability}, thanks to the feedback controller of the safety embedded system, the PIDB controller in the preceding example, with the feedback gain vector $K_2$, renders the system LIS$^3$ (\autoref{corollary: linear system AS then IS3}). We rerun the numerical simulations in the foregoing example % but with input disturbance, namely under two scenarios. The first is with some moderate, eventually small disturbance whose amplitude decrease exponentially with the simulation time and the other with a relatively large bounded input disturbance. 
with a relatively large bounded input disturbance, $\frac{||d||_\infty}{M_f g} = 10$. \autoref{fig: ACC PIDB - large noise - smaller gains} shows the numerical simulation in which it can be seen that the safety embedded feedback controller, PIDB, handles the input disturbance well and seeks to satisfy the performance objective when possible while ensuring safety.

\subsection{Mobile Robots Collision Avoidance} 
For this application example, we extend the tutorial example in \autoref{sec: Barrier States}. We consider two simple mobile robots $i$ and $j$ to navigate their way toward prespecified targets while avoiding collision with each other and an obstacle. %The robots, robot $i$ and robot $j$, are represented with 2D single integrators as
%\begin{equation*} \begin{split}
%    \dot{x}_i = \begin{bmatrix} u_{i1} \\ u_{i2}\end{bmatrix}, \dot{x}_j= \begin{bmatrix} u_{j1} \\ u_{j2}\end{bmatrix}
%\end{split}  \end{equation*}
To avoid collision, a BaS is featured through a barrier function that prevents the robots from getting too close to each other. The maximum allowed distance between the two robots is $\delta=0.1$. Therefore, the associated safe set is $\{x_i,x_j \in \mathbb{R}^{2} \ | \ ||x_{i}-x_{j}||^2 > \delta^2 \}$. An obstacle is considered that is represented by a circle at $(0,0)$ with a radius of $0.25$. This calls for a BaS for each agent with respect to the obstacle. Therefore, the overall safe set for each agent is given by $\mathcal{S}=\{x_k\in \mathbb{R}^{2} \ | \ ||x_{i}-x_{j}||^2 > \delta^2 \; \text{and} \; x_{k1}^2 + x_{k2}^2>0.25^2, \;  k=i,j \}$. The Log barrier is selected, $\beta_l=\log(\frac{1+h_l}{h_l})$ for $l=1,2,3$, to construct three barrier states, $z_1, z_2, z_3$, where the first represents the distance constraint between the two robots and the second and third barrier states represent the barriers needed to avoid the obstacle for agent $i$ and agent $j$ respectively. To show the versatility of the proposed method, we design a linear quadratic regulator (LQR) with a quadratic cost function of the form $ \int_{0}^{\infty} \bar{x}^{\top} Q \bar{x} + u^{\top} R u$, with $Q=I_{7\times 7}$ and $R=I_{4\times 4}$, where $\bar{x}=[x_i,x_j,z_1,z_2,z_3]^{\top}$ is the embedded state and $I$ is the identity matrix. As shown in Fig. \ref{fig: Robots with obstacle}, using the proposed technique, we are effectively able to send the two robots safely to the targeted positions while avoiding the obstacle as well as colliding with each other. The robots get very close to each other at sometime but never get too close, i.e. the distance between them is greater than $\delta$. 

 \begin{figure}
\vspace{-2mm}
\centering
     \subfloat{\includegraphics[trim=0 0 0 0, clip, width=0.8\linewidth]{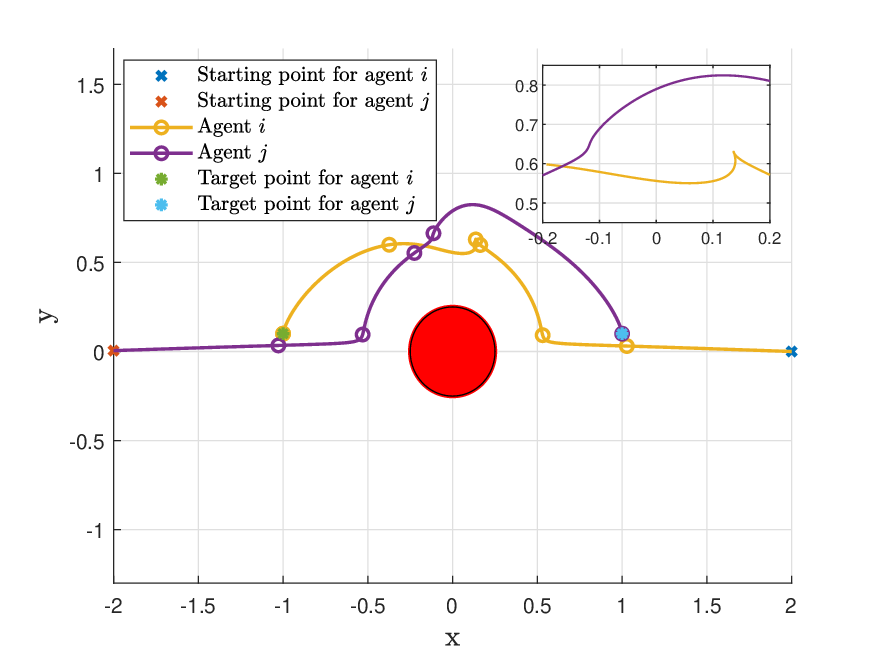}}
     \vspace{-4mm}
    \subfloat{\includegraphics[trim=0 5 0 0, clip, width=0.8\linewidth]{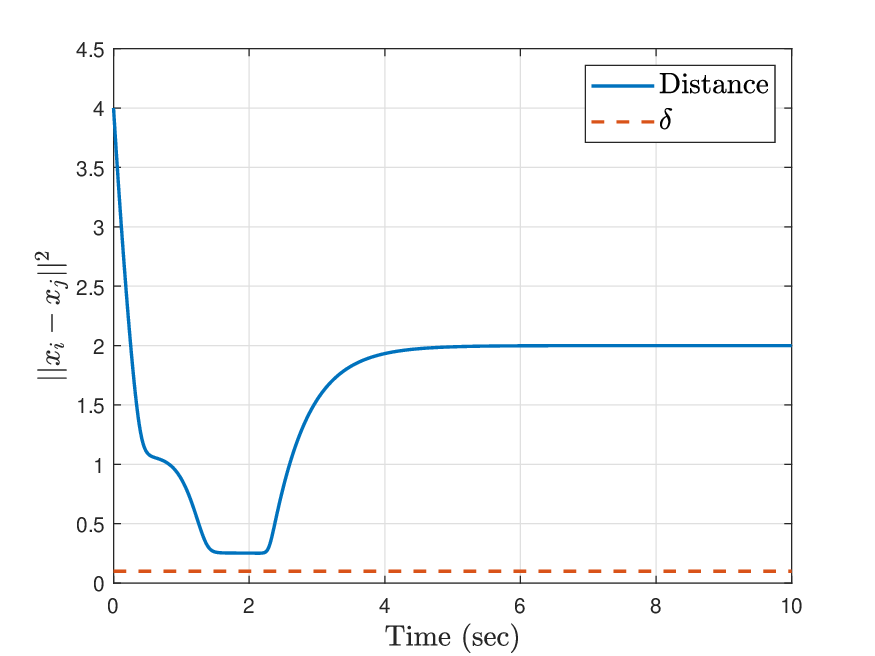}} \caption{Two mobile robots, shown in yellow and purple, starting from two opposite initial positions sent to two opposite final positions with an obstacle in-between. The robots switch positions while avoiding collision with each other and the obstacle. It can be seen that the two robots start by moving in a straight line toward their target positions but start to steer away to avoid the obstacle and then take some turns to avoid crossing the same point at the same time and to avoid getting too close (less than $\delta$ distance) to each other as shown in the bottom figure.}
      \label{fig: Robots with obstacle}
      \vspace{-5mm}
   \end{figure} 

 \vspace{-4mm}
\section{Conclusion} \label{sec: Conclusion}
This paper formally developed and generalized the concept of barrier states (BaS) and safety embedded systems for safe multi-objective control. By embedding safety objectives as system states, the proposed approach transforms constrained control problems into unconstrained ones, enabling the simultaneous achievement of safety and performance objectives when possible. The key advantages of this framework include:
\begin{itemize}
    \item \textbf{Intrinsic Barrier Rate:} The barrier’s rate of change is determined by the control law rather than manual tuning, ensuring a naturally regulated safety embedded system.
    \item \textbf{Flexibility with Control Methods:} The approach is compatible with various control techniques, including linear, optimal, model predictive (MPC), adaptive, and robust control \cite{oshin2024diffRobMPC,cho2023mpcbas,Almubarak2021SafetyEC,almubarak2021safeddp,almubarak2021safeminmax,aoun2023l1}.
    \item \textbf{Relative Degree Agnostic:} Unlike many existing methods, this framework does not impose specific relative degree requirements, adding flexibility to safe control synthesis.
    \item \textbf{Unified Control Objectives:} The integration of safety and performance constraints into a single unconstrained control problem allows for multiple safety constraints to be addressed through either a single or multiple barrier states, simplifying control design and analysis.
\end{itemize}
While the proposed method offers substantial benefits, it naturally inherits some well-known challenges in nonlinear control. Augmenting the system increases its dimensionality and introduces nonlinearity, necessitating nonlinear control techniques for analysis and synthesis. Additionally, the achievable region of attraction (ROA) depends on the chosen control design, and computational complexity may vary depending on the selected control method. However, these factors represent practical trade-offs rather than inherent limitations of the framework.

The method was extended to handle input constraints and robustness against external noise, leading to the development of input-to-state safety (ISSf) via barrier states and the novel concept of input-to-state safe stability (IS$^3$). Numerical simulations in safety-critical applications validated its effectiveness, including its ability to prioritize safety when performance objectives cannot be simultaneously achieved, as demonstrated by the PIDB controller for adaptive cruise control.

\printbibliography
   
\begin{IEEEbiography}[{\includegraphics[width=1in,height=1.25in,clip,keepaspectratio]{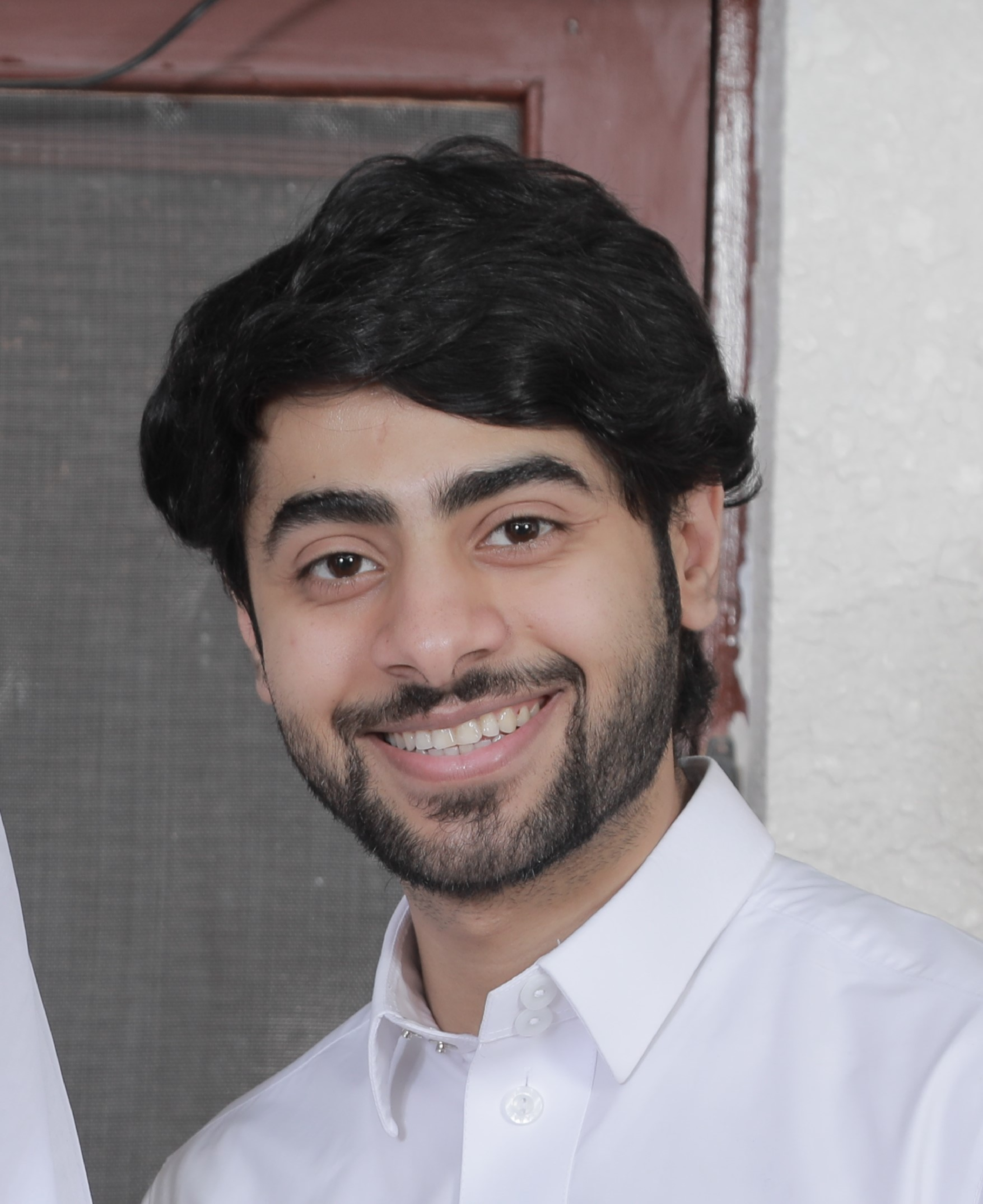}}] {Dr. Hassan Almubarak} received his Ph.D. from the Autonomous Control and Decision Systems Laboratory at Georgia Institute of Technology in 2024. He earned his M.S. in Electrical and Computer Engineering from Georgia Tech in 2018 and his B.Sc. in Control and Instrumentation Systems Engineering from King Fahd University of Petroleum and Minerals (KFUPM), Dhahran, Saudi Arabia, in 2016. He is currently an Assistant Professor in the Department of Control and Instrumentation Engineering at KFUPM. Prior to joining KFUPM, he was a Research Engineer at General Electric (GE) Vernova Advanced Research Center in Niskayuna, New York, USA, where he architected generative AI-driven solutions utilizing language models and cloud technologies to optimize complex processes. He also worked on developing various advanced model predictive control technologies for intelligent multi-agent energy systems. His research interests include multi-objective and safety-critical control, optimal control, dynamic optimization, machine learning, and AI, with applications in robotics, autonomous systems, and intelligent systems optimization.

\end{IEEEbiography}

\begin{IEEEbiography}[{\includegraphics[width=1in,height=1.25in,clip,keepaspectratio]{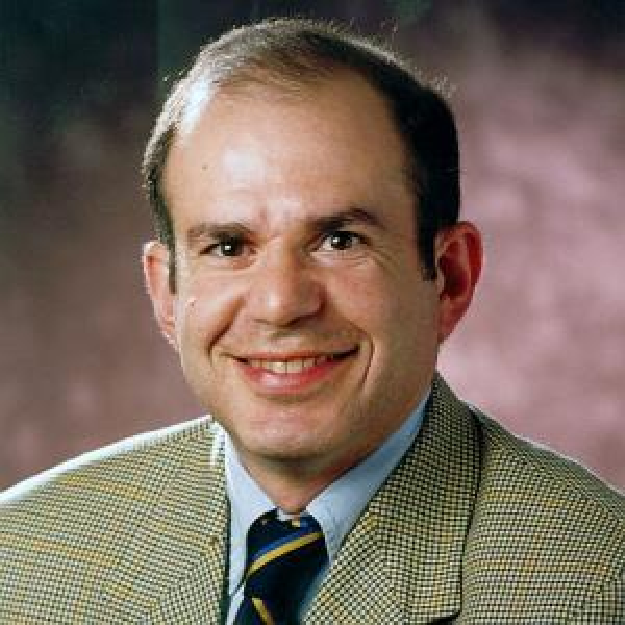}}]{Dr. Nader Sadegh} has been on the faculty of the Woodruff School of Mechanical Engineering at the Georgia Institute of Technology since January of 1988. He received his Ph.D. and M.S. degrees from the University of California at Berkeley in 1987 and 1984, respectively. His early research work in the field of robotics and automation resulted in the development of a class of adaptive and learning controllers for nonlinear mechanical systems including robotic manipulators. Parallel to his efforts in the repetitive learning control area, Dr. Sadegh has made significant contributions to the theory and applications of artificial neural networks (ANN). In particular, he developed a new class of networks that enable a machine to learn the ‘concept’ of executing a series of similar tasks rather than a specific one as is done in repetitive control through practice. Dr. Sadegh’s most recent and ongoing work are in the areas of robotic motion planning, infinite dimensional boundary control systems, and nonlinear optimal control. Dr. Sadegh has authored and co-authored over 150 archival publications and holds a joint patent for the application of his learning controller to a robotic system. He has been the director of Georgia Tech’s interdisciplinary Robotics graduate program since July 2017. He has also served as an associate editor for the ASME journal of Dynamic Systems, Measurement, and Control, and is currently on the DSCC Conference Editorial Board.
\end{IEEEbiography}

\begin{IEEEbiography}[{\includegraphics[width=1in,height=1.25in,clip,keepaspectratio]{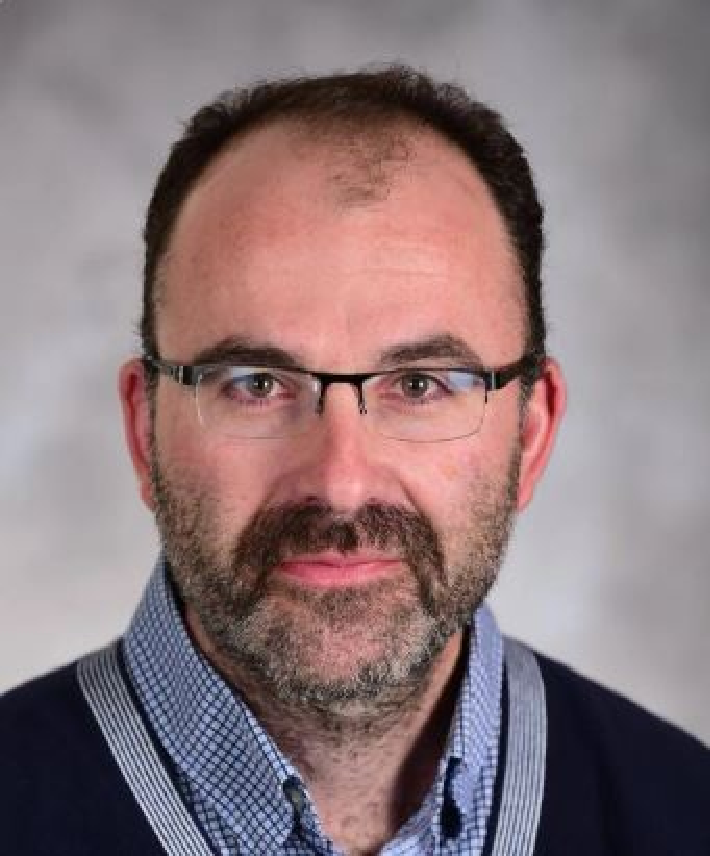}}]{Dr. Evangelos A. Theodorou} is an Associate Professor with the Daniel Guggenheim School of Aerospace Engineering at Georgia Institute of Technology. He is also the director of the Autonomous Control and Decision Systems Laboratory, and he is affiliated with the Institute of Robotics and Intelligent Machines and the Center for Machine Learning Research at Georgia Tech. Dr. Theodorou holds a BS in Electrical Engineering,  from the Technical University of Crete (TUC), Greece in 2001 and three MSc degrees in Production Engineering from TUC in 2003, Computer Science and Engineering from University of Minnesota in 2007, and Electrical Engineering from the University of Southern California (USC) in 2010. In 2011, he graduated with his PhD in Computer Science from USC. From 2011 to 2013, he was a Postdoctoral Research Fellow with the department of Computer Science and Engineering, University of Washington. Dr. Theodorou is the recipient of the King-Sun Fu best paper award of the IEEE Transactions on Robotics in 2012 and recipient of several best paper awards and nominations in machine learning and robotics conferences. His theoretical research spans the areas of stochastic optimal control theory, machine learning, statistical physics and optimization. 
\end{IEEEbiography}

\end{document}